\newcommand{\mH}{\mathcal{H}}
\newcommand{\trace}{\operatorname{Tr}}
\newtheorem{theorem}{Theorem}
\newtheorem{claim}[theorem]{Claim}
\newtheorem{lemma}[theorem]{Lemma}
\newtheorem{corollary}[theorem]{Corollary}
\newtheorem{definition}{Definition}
\newcommand{\mycases}[4]{{
\left\{
\begin{array}{ll}
    {#1} & {\;\text{#2}} \\[1ex]
    {#3} & {\;\text{#4}}
\end{array}
\right. }}
\newcommand{\D}{\mathcal{D}}
\newcommand{\K}{\ensuremath{\mathcal K}}
\newcommand{\reals}{\mathbb{R}}
\newcommand{\E}{\mathbb{E}}
\DeclareMathOperator*{\argmin}{arg\,min}
\title{Online Learning of Quantum States}
\author{Scott Aaronson\\ 
UT Austin \thanks{Supported
by a Vannevar Bush Faculty Fellowship from the US Department of Defense.
\ Part of this work was done while the author was supported by an NSF Alan T.
Waterman Award.}\\
\texttt{aaronson@cs.utexas.edu}\\
\And Xinyi Chen\\
Princeton University and Google AI Princeton  \\
\texttt{xinyic2015@gmail.com} \\
\And Elad Hazan\\
Princeton University and Google AI Princeton \\
\texttt{ehazan@cs.princeton.edu} \\
\And Satyen Kale \\
Google AI, New York\\
\texttt{satyenkale@google.com}\\
\And Ashwin Nayak \\
University of Waterloo \thanks{Research supported in part by 
NSERC Canada.}\\
\texttt{ashwin.nayak@uwaterloo.ca}
}
\begin{document}

\maketitle

\begin{abstract}
Suppose we have many copies of an unknown $n$-qubit state $\rho$. \ We measure
some copies of\ $\rho$\ using a known two-outcome measurement $E_{1}$, then
other copies using a measurement $E_{2}$, and so on. \ At each stage $t$, we
generate a current hypothesis $\omega_{t}$ about the state $\rho$, using the
outcomes of the previous measurements. \ We show that it is possible to do this
in a way that guarantees that $\left\vert  \trace \!\left(  E_{i}%
\omega_{t}\right)  - \trace \!\left(  E_{i}\rho\right)  \right\vert $,
the error in our prediction for the next measurement, is at least $\varepsilon$ at most
$\operatorname{O}\!\left( n / \varepsilon^2 \right)  $\ times. \ Even in the \textquotedblleft
non-realizable\textquotedblright\ setting---where there could be arbitrary noise in
the measurement outcomes---we show how to output hypothesis states that incur at most  $\operatorname{O}(  \sqrt
{Tn}\;)  $ excess loss over the best possible state on the first $T$\ measurements. \ These results
generalize a 2007 theorem by Aaronson on the PAC-learnability of
quantum states, to the online and regret-minimization settings. \ We give three different ways to
prove our results---using convex optimization, quantum postselection, and sequential fat-shattering dimension---which have different advantages in terms of parameters and portability.

\end{abstract}


\section{Introduction}
\label{sec-intro}

\emph{State tomography\/} is a fundamental task in quantum computing
of great practical and theoretical importance. In a typical scenario, we 
have access to an apparatus that is capable of producing many copies of 
a quantum state, and we wish to obtain a description of the state via
suitable measurements. Such
a description would allow us, for example, to check the accuracy with which 
the apparatus constructs a specific target state.

How many single-copy measurements 
are needed to \textquotedblleft learn\textquotedblright%
\ an unknown $n$-qubit quantum state $\rho$? \ Suppose we wish to reconstruct 
the full $2^{n}\times2^{n}$ density matrix, even approximately, to
within~$\varepsilon$ in trace distance. If we make no
assumptions about $\rho$, then it is straightforward to show that the 
number of measurements needed grows exponentially with $n$. In fact, even
when we allow joint measurement of multiple copies of the state, an
exponential number of copies of~$\rho$ are required (see, e.g., 
\cite{OW16,HHJWY17}). (A ``joint measurement'' of two or more states on 
disjoint sequences of qubits is a \emph{single\/} measurement of all the 
qubits together.)

Suppose, on the other hand, that there is some
probability distribution $\mathcal{D}$ over possible yes/no measurements,
where we identify the measurements with $2^{n}\times2^{n}$\ Hermitian matrices $E$\ with
eigenvalues in $\left[  0,1\right]  $. \ Further suppose we are only concerned about
learning the state\ $\rho$\ well enough to predict the outcomes of \textit{most}%
\ measurements~$E$\ drawn from $\mathcal{D}$---where \textquotedblleft
predict\textquotedblright\ means approximately calculating the probability,
$\trace \!\left(  E\rho\right)  $, of a \textquotedblleft
yes\textquotedblright\ result. \ Then for how many (known) sample measurements
$E_{i}$, drawn independently from~$\mathcal{D}$, do we need to know the
approximate value of $\trace \!\left(  E_{i}\rho\right)  $, before we
have enough data to achieve this?

\cite{aar:learn}\ proved that the number of sample
measurements needed, $m$, grows only \textit{linearly} with the number of
qubits $n$. \ What makes this surprising is that it represents an exponential
reduction compared to full quantum state tomography. \ Furthermore, the
prediction strategy is extremely simple. \ Informally, we merely need to find
any \textquotedblleft hypothesis state\textquotedblright\ $\omega$\ that
satisfies $\trace \!\left(  E_{i}\omega\right)  \approx
\trace \!\left(  E_{i}\rho\right)  $\ for all the sample measurements
$E_{1},\ldots,E_{m}$.\ \ Then with high probability over the choice of sample
measurements, that hypothesis $\omega$\ necessarily \textquotedblleft
generalizes\textquotedblright,\ in the sense that $\trace \!\left(
E\omega\right)  \approx\trace \!\left(  E\rho\right)  $\ for most
additional $E$'s drawn from $\mathcal{D}$. \ The learning theorem led to
followup work including a full characterization of
quantum advice (\cite{adrucker}); efficient learning for stabilizer states (\cite{rocchetto});
the ``shadow tomography'' protocol (\cite{aar:shadow}); and recently, the first experimental demonstration of quantum state PAC-learning (\cite{explearning}).

A major drawback of the learning theorem due to Aaronson is the assumption that the
sample measurements are drawn \textit{independently} from $\mathcal{D}$---and
moreover, that the same distribution $\mathcal{D}$\ governs both the training
samples, and the measurements on which the learner's performance is later
tested. \ It has long been understood, in computational learning theory, that
these assumptions are often unrealistic: they fail to account for adversarial
environments, or environments that change over time. This is precisely
the state of affairs in current experimental implementations of quantum
information processing. Not all measurements of quantum states may be
available or feasible in a specific implementation, \emph{which\/} measurements 
are feasible is dictated by Nature, and as we develop more control over the
experimental set-up, more sophisticated measurements become available.
The task of learning a state prepared in the laboratory thus takes the
form of a game, with the theorist on one side, and the experimentalist 
and Nature on the other: the theorist is repeatedly challenged to predict 
the behaviour of the state with respect to the next measurement that 
Nature allows the experimentalist to realize, with the opportunity to 
refine the hypothesis as more measurement data become available.

It is thus desirable to design learning algorithms that work in the
more stringent \textit{online learning} model. \ Here the learner is
presented a sequence of input points, say $x_{1},x_{2},\ldots$, one at a time.
\ Crucially, there is no assumption whatsoever about the $x_{t} $'s: the
sequence could be chosen adversarially, and even adaptively, which means that
the choice of~$x_{t}$\ might depend on the learner's behavior on $x_{1}%
,\ldots,x_{t-1}$. \ The learner is trying to learn some unknown function
$f\!\left(  x\right)  $, about which it initially knows only that $f$ belongs to
some hypothesis class $\mH$---or perhaps not even that; we also consider the scenario where
the learner simply tries to compete with the best predictor in $\mH$, which might or might not be a good predictor. \ The learning proceeds as follows: for
each $t$, the learner first guesses a value $y_{t}$\ for $f\!\left(
x_{t}\right)  $, and is then told the true value~$f\!\left(  x_{t}\right)  $,
or perhaps only an approximation of this value. 
\ Our goal is to design a learning algorithm with the following guarantee:
\textit{regardless of the sequence of }$x_{t}$\textit{'s, the learner's guess,
}$y_{t}$\textit{, will be far from the true value }$f\!\left(  x_{t}\right)
$\textit{\ at most }$k$\textit{\ times} (where $k$, of course, is
as small as possible). \ The $x_{t}$'s on which the learner errs could be
spaced arbitrarily; all we require is that they be bounded in number.

This leads to the following question: can the learning theorem established
by
\cite{aar:learn}\ be generalized to the online learning setting? \ In other
words: is it true that, given a sequence $E_{1},E_{2},\ldots$\ of yes/no
measurements, where each $E_{t}$\ is followed shortly afterward by an
approximation of
$\trace \!\left(  E_{t}\rho\right)  $, there is a way to anticipate
the $\trace \!\left(  E_{t}\rho\right)  $\ values by guesses $y_{t}%
\in\left[  0,1\right]  $, in such a way that $\left\vert y_{t}%
-\trace \!\left(  E_{t}\rho\right)  \right\vert >\varepsilon$\ at most, say,
$\operatorname{O}\!\left(  n\right)  $\ times (where $\varepsilon>0$\ is some constant, and
$n$\ again is the number of qubits)?
The purpose of this paper is to provide an affirmative answer.

Throughout the paper, we consider only two-outcome measurements of an
$n$ qubit mixed state $\rho$, and we specify such a measurement by a
$2^{n}\times2^{n}$ Hermitian matrix $E$ with
eigenvalues in $\left[  0,1\right]$.
We say that~$E$
\textquotedblleft accepts\textquotedblright\ $\rho$\ with probability
$\trace \!\left(  E\rho\right)  $\ and \textquotedblleft
rejects\textquotedblright\ $\rho$ with probability $1-\trace \!\left(
E\rho\right)  $. We prove that:

\begin{theorem}
\label{thm1}
Let $\rho$\ be an $n$-qubit mixed state, and let $E_{1},E_{2},\ldots$\ be a
sequence of $2$-outcome measurements that are revealed to the learner one by one, each followed by a value $b_t \in [0,1]$ such that $\left| \trace(E_t \rho) - b_t \right| \leq \varepsilon / 3$.  \ Then there is an explicit strategy for outputting hypothesis states
$\omega_{1},\omega_{2},\ldots$\ such that
$
\left\vert \trace \!\left(  E_t \omega_{t}\right)
-\trace \!\left(  E_t \rho\right)  \right\vert >\varepsilon
$
for at most $\operatorname{O}\!\left( \frac{ n}{\varepsilon^2} \right)  $\ values of $t$.
\end{theorem}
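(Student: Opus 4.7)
I would view this as an online mistake-bound learning task on the convex body of $n$-qubit density matrices, and attack it with a matrix version of the Multiplicative Weights / Exponentiated Gradient update, using the quantum relative entropy $\Phi_t := \trace(\rho \log \rho) - \trace(\rho \log \omega_t)$ as a potential. Initializing $\omega_1 = I/2^n$ gives $\Phi_1 \le n \ln 2$, while $\Phi_t \ge 0$ always by Klein's inequality, so it suffices to exhibit an update for which every mistake round drives $\Phi_t$ down by $\Omega(\varepsilon^2)$; dividing these two bounds then yields the claimed $\operatorname{O}(n/\varepsilon^2)$ mistake count.

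\textbf{Algorithm.} At each round $t$, predict with $\omega_t$. After seeing $b_t$, declare a mistake (and update) iff $|\trace(E_t \omega_t) - b_t| > 2\varepsilon/3$; otherwise set $\omega_{t+1} = \omega_t$. On a mistake round, let $\sigma_t := \sgn(\trace(E_t \omega_t) - b_t) \in \{\pm 1\}$ and set
\[
\omega_{t+1} \;=\; \frac{\exp(\log \omega_t - \eta \sigma_t E_t)}{\trace \exp(\log \omega_t - \eta \sigma_t E_t)},
\]
with learning rate $\eta = \Theta(\varepsilon)$. The threshold $2\varepsilon/3$ is calibrated so that, combined with $|b_t - \trace(E_t \rho)| \le \varepsilon/3$, a triangle inequality ensures both that every ``true mistake'' ($|\trace(E_t \omega_t) - \trace(E_t \rho)| > \varepsilon$) triggers an update, and that every update round has a signed margin $\sigma_t \trace(E_t (\omega_t - \rho)) \ge \varepsilon/3$---exactly the inequality that drives the potential down.

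\textbf{Potential drop.} A short calculation rewrites $\Phi_t - \Phi_{t+1}$ on an update round as $-\eta \sigma_t \trace(E_t \rho) - \ln \trace \exp(\log \omega_t - \eta \sigma_t E_t)$. The key matrix facts are Golden--Thompson, $\trace \exp(A + B) \le \trace(e^A e^B)$, which converts an exponential of a sum into a trace of a product, and the operator inequality $\exp(-\eta \sigma_t E) \preceq I - (1 - e^{-\eta \sigma_t}) E$ for $0 \preceq E \preceq I$, which follows from convexity of $x \mapsto e^{-\eta \sigma_t x}$ on $[0,1]$ via its chord bound. Together with $\ln(1+y) \le y$, these reduce the drop to a scalar expression in $\trace(E_t \omega_t)$ and $\trace(E_t \rho)$; a second-order Taylor expansion in $\eta$ combined with the update-round margin then yields an $\Omega(\varepsilon^2)$ drop once $\eta$ is tuned proportional to $\varepsilon$.

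\textbf{Main obstacle.} The central technical obstacle is matrix non-commutativity: $\log \omega_t$ and $E_t$ do not in general commute, so the one-line scalar Multiplicative Weights analysis does not transfer directly. Golden--Thompson is the standard workaround---it moves the sum out of the exponent into a product inside the trace---after which the scalar convexity bound on $e^{-\eta \sigma_t x}$ lifts to an operator inequality and the argument proceeds essentially as in the commuting case. The remaining steps are routine bookkeeping: telescoping the per-round drops over update rounds, observing that the two sign cases $\sigma_t = \pm 1$ are symmetric, and noting that no-update rounds leave the potential unchanged.
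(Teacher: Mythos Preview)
Your proposal is correct and lands on the same algorithm the paper uses for its sharpest bound: the Matrix Multiplicative Weights (equivalently, Matrix Exponentiated Gradient) update, with the hypothesis changed only on rounds where $|\trace(E_t\omega_t)-b_t|>2\varepsilon/3$. The paper, however, organizes the analysis differently: it first proves a general regret bound $R_T = \operatorname{O}(L\sqrt{Tn})$ for MMW (and also for RFTL with von Neumann entropy regularization), and then converts this into a mistake bound by observing that over the $T$ update rounds the learner's absolute loss exceeds $2\varepsilon/3$ per round while $\rho$ incurs at most $\varepsilon/3$, so $\tfrac{2\varepsilon}{3}T \le \tfrac{\varepsilon}{3}T + \operatorname{O}(\sqrt{Tn})$ forces $T=\operatorname{O}(n/\varepsilon^2)$. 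Your route is more direct: you track the quantum relative entropy $D(\rho\|\omega_t)$ as a potential and show via Golden--Thompson plus the operator chord bound that each update round drops it by $\Omega(\varepsilon^2)$. This avoids the detour through regret and is closer in spirit to the original Tsuda--R\"atsch--Warmuth analysis; the paper's detour, on the other hand, yields Theorem~\ref{thm2} (the non-realizable regret bound) as a byproduct, which your per-mistake potential argument does not immediately give.
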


We also prove a theorem for the so-called \emph{regret minimization model\/} (i.e., the ``non-realizable case''), where we make no assumption about the input data arising from an actual quantum state, and our goal is simply to do not much worse than the best hypothesis state that could be found with perfect foresight.
In this model, the measurements~$E_1, E_2, \dotsc$ are presented to a
learner one-by-one. In iteration~$t$, after seeing $E_t$, the learner is 
challenged to output a hypothesis state $\omega_t$, and then suffers a ``loss'' 
equal to $\ell_t(\trace(E_t \omega_t))$ where $\ell_t$ is a real
function that is revealed to the learner. Important examples of loss 
functions are~$L_1$ loss, when~$\ell_t(z) \coloneqq \left| z - b_t
\right|$, and~$L_2$ loss, when~$\ell_t(z) \coloneqq \left( z - b_t 
\right)^2$, where~$b_t \in [0,1]$. The number~$b_t$ may be an
approximation of~$\trace(E_t \rho)$ for some fixed but unknown quantum 
state~$\rho$, but is allowed to be arbitrary in general. In particular, the
pairs~$(E_t, b_t)$ may not be consistent with any quantum state.
Define the \emph{regret\/}~$R_T$, after $T$ iterations, to be the amount
by which the actual loss of the learner exceeds the loss of the best 
single hypothesis:
$$
R_T \coloneqq \sum_{t=1}^{T} \ell_t(\trace(E_t \omega_t))
- \min_{\varphi} ~ \sum_{t=1}^{T} \ell_t(\trace(E_t \varphi)) \enspace.
$$
The learner's objective is to minimize regret. We show that:
\begin{theorem}
\label{thm2} Let $E_{1},E_{2},\ldots$\ be a sequence of two-outcome 
measurements on an $n$-qubit state presented to the learner,
and~$\ell_1, \ell_2, \dotsc$ be the corresponding loss functions
revealed in successive iterations in the regret minimization model.
Suppose~$\ell_t$ is convex and $L$-Lipschitz; in particular, for 
every $x \in \reals$, there is a sub-derivative $\ell_t'(x)$ such
that $\left| \ell_t'(x) \right| \leq L$.
Then there is an explicit learning strategy that guarantees
regret~$R_T = \operatorname{O} ( L\sqrt{Tn}\;)$ for
all $T$. \ This is so even assuming the measurement~$E_t$ and loss
function~$\ell_t$ are chosen adaptively, in response to the learner's 
previous behavior.

Specifically, the algorithm applies to $L_1$ loss and $L_2$ loss, and achieves
regret~$\operatorname{O}( \sqrt{Tn}\;)$ for both.
\end{theorem}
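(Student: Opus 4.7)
The plan is to cast the theorem as an instance of online convex optimization over the set~$\K$ of $n$-qubit density matrices, and apply the Matrix Exponentiated Gradient (equivalently, Matrix Multiplicative Weights / Regularized-Follow-The-Leader with von Neumann entropy as the regularizer) algorithm. Define the per-round loss $f_t(\varphi) := \ell_t(\trace(E_t \varphi))$. Since $\trace(E_t \varphi)$ is linear in~$\varphi$ and $\ell_t$ is convex, each $f_t$ is a convex function on the convex set $\K = \{\varphi \succeq 0 : \trace(\varphi) = 1\}$. A sub-gradient at $\varphi$ is $g_t = \ell_t'(\trace(E_t \varphi)) \cdot E_t$, and because $E_t$ has eigenvalues in $[0,1]$ and $|\ell_t'| \leq L$, we have $\|g_t\|_{\infty} \leq L$ in spectral norm, which is the norm dual to the trace norm used to measure the diameter of~$\K$.

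Next I would instantiate the algorithm: at step $t$, output $\omega_t \propto \exp(-\eta \sum_{s < t} g_s)$ with a suitably tuned step size $\eta = \Theta(\sqrt{n/T}/L)$. The standard analysis of MEG on the spectrahedron (see, e.g., Tsuda--R\"atsch--Warmuth, Arora--Kale, or Hazan's online convex optimization text) yields, for arbitrary comparator $\varphi \in \K$,
\begin{equation*}
\sum_{t=1}^T g_t \bullet (\omega_t - \varphi) \;\leq\; \frac{S(\varphi \,\|\, \omega_1)}{\eta} + \eta \sum_{t=1}^T \|g_t\|_{\infty}^2 ,
\end{equation*}
where $S$ is the quantum relative entropy and $\omega_1 = I/2^n$ is the maximally mixed state. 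Since $S(\varphi \,\|\, I/2^n) \leq n \ln 2$ and $\|g_t\|_\infty \leq L$, tuning $\eta$ gives the bound $\operatorname{O}(L\sqrt{Tn})$ on the right-hand side. Convexity of $f_t$ then upgrades this linearized regret to genuine regret: $\sum_t f_t(\omega_t) - \sum_t f_t(\varphi) \leq \sum_t g_t \bullet (\omega_t - \varphi) = \operatorname{O}(L\sqrt{Tn})$, uniformly in $\varphi$, which is exactly $R_T = \operatorname{O}(L\sqrt{Tn})$.

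For the two named special cases, I would simply verify the Lipschitz constants: the $L_1$ loss $\ell_t(z) = |z - b_t|$ has $|\ell_t'(z)| \leq 1$, while the $L_2$ loss $\ell_t(z) = (z - b_t)^2$ on the relevant range $z \in [0,1]$, $b_t \in [0,1]$ satisfies $|\ell_t'(z)| = 2|z-b_t| \leq 2$. In either case $L = \operatorname{O}(1)$, yielding the stated $\operatorname{O}(\sqrt{Tn})$ regret.

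The main conceptual obstacle, and the one worth highlighting, is getting dimension dependence $n$ rather than $2^n$: a naive Euclidean online gradient descent on matrices would measure the comparator diameter in Frobenius norm and give a useless bound scaling polynomially in $2^n$. The fix is to use an algorithm that is adapted to the geometry of the spectrahedron, namely matrix entropic regularization, whose Bregman divergence from the maximally mixed state is bounded by $n \ln 2$ rather than anything dimension-exponential; this is precisely the matrix analog of the $\log d$ ``expert'' bound and is what drives the exponential saving promised by the theorem.
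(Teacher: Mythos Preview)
Your proposal is correct and is essentially the same as the paper's own proof: the paper instantiates Matrix Multiplicative Weights (equivalently RFTL with von Neumann entropy regularization) with the linearized losses $\nabla_t = \ell_t'(\trace(E_t\omega_t))E_t$, invokes the Arora--Kale/Tsuda--R\"atsch--Warmuth regret bound using $\|\nabla_t\|\le L$ and $S(\varphi\,\|\,\mathbbm{I}/2^n)\le n\ln 2$, and then uses convexity of $\ell_t$ to pass from linearized to true regret. Your verification that $L_1$ and $L_2$ loss are $\operatorname{O}(1)$-Lipschitz on $[0,1]$ is likewise exactly how the paper specializes to those cases.
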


The online strategies we present enjoy several advantages over full
state tomography, and even over ``state certification'', in which we
wish to test whether a given quantum state is close to a desired state or far
from it. Optimal algorithms for state tomography~(\cite{OW16,HHJWY17})
or certification~(\cite{BOW17})
require joint measurements of an exponential number of copies of the
quantum state, and assume the ability to perform noiseless,
universal quantum computation. On the other hand, the algorithms
implicit in Theorems~\ref{thm1} and~\ref{thm2} involve only single-copy
measurements, allow for noisy measurements, and capture ground reality more
closely. They produce a hypothesis state that mimics the unknown state 
with respect to measurements
that \emph{can be\/} performed in a given experimental set-up, and the accuracy
of prediction improves as the set of available measurements grows.
For example, in the realizable case, i.e., when the data arise from an actual
quantum state, the average $L_1$ loss per iteration is $\operatorname{O} 
(\sqrt{n/T}\,)$. This tends to zero, as the number of
measurements becomes large. Note that $L_1$ loss may be as large
as~$1/2$ per iteration in the worst case, but this occurs at
most~$\operatorname{O} (\sqrt{nT}\, )$ times.
Finally, the algorithms have run time exponential in the number of qubits 
in each iteration, but are entirely \emph{classical\/}.
Exponential run time is unavoidable, as the measurements are presented 
explicitly as~$2^n \times 2^n$ matrices, where~$n$ is the number of 
qubits. If we were required to output the hypothesis states, the length 
of the output---also exponential in the number of qubits---would again
entail exponential run time.

It is natural to wonder whether Theorems \ref{thm1} and \ref{thm2} leave any room for improvement. \ Theorem \ref{thm1} is asymptotically optimal in its mistake bound of $\operatorname{O}(n/\varepsilon^2)$;
this follows from the property that $n$-qubit quantum states, considered as a hypothesis class, have $\varepsilon$-fat-shattering dimension $\Theta(n/\varepsilon^2)$ (see, for example, \cite{aar:learn}).
On the other hand, there is room to improve Theorem \ref{thm2}. The 
bounds of which we are aware are $\Omega(\sqrt{T n}\,)$ for the $L_1$ loss~(see,
e.g., \cite[Theorem~4.1]{mmw}) in the non-realizable case and $\Omega(n)$ for the $L_2$ loss
in the realizable case,
when the feedback consists of the measurement outcomes. \ (The latter
bound, as well as an~$\Omega(\sqrt{T n}\,)$ bound for~$L_1$ loss in the same
setting, come from considering quantum mixed states that consist of $n$ independent classical coins, each of which could land heads with probability either $1/2$ or $1/2 + \varepsilon$.
The paramater~$\varepsilon$ is set to~$\sqrt{n/T}$.)

We mention an application of Theorem \ref{thm1}, that appears in simultaneous work. \ \cite{aar:shadow} has given an algorithm for the so-called \emph{shadow tomography} problem. \ Here we have an unknown $D$-dimensional pure state $\rho$, as well as known two-outcome measurements $E_1,\ldots,E_m$. \ Our goal is to approximate $\trace(E_i \rho)$, for \emph{every} $i$, to within additive error $\varepsilon$. \ We would
like to do this by measuring $\rho^{\otimes k}$, where $k$ is as small as possible. \ Surprisingly, \cite{aar:shadow} showed that this can be achieved with $k = \widetilde{\operatorname{O}}( (\log M)^4 (\log D) / \varepsilon^5 )$, that is, a number of copies of $\rho$ that
is only \emph{polylogarithmic} in both $D$ and $M$. \ One component of his algorithm is essentially tantamount to online learning with $\widetilde{\operatorname{O}}(n / \varepsilon^3)$
mistakes---i.e., the learning algorithm we present in Section \ref{sec:postselection} of this paper. \ However, by using Theorem \ref{thm1} from this paper in a black-box manner, we can improve the sample complexity of shadow tomography to $\widetilde{\operatorname{O}}( (\log M)^4 (\log D) / \varepsilon^4 )$. \ Details appear in (\cite{aar:shadow}).

To maximize insight, in this paper we give \textit{three} very different approaches to proving Theorems \ref{thm1} and \ref{thm2} (although we do
not prove every statement with all three approaches).
Our first approach is to adapt techniques from online convex optimization to the setting of density matrices,
which in general may be over a complex Hilbert space. \ This requires extending standard techniques to cope with convexity and Taylor approximations, which are widely used for functions over the real domain, but not over the complex domain. \ We also give an efficient iterative algorithm to produce predictions. \ This approach connects our problem to the modern mainstream of online learning algorithms, and achieves the best parameters
(as stated in Theorems~\ref{thm1} and~\ref{thm2}).

Our second approach is via a postselection-based learning procedure, which starts with the maximally mixed state as a hypothesis and then repeatedly refines it by simulating postselected measurements. \ This approach builds on earlier work due
to \cite{aar:adv}, specifically the proof of $\mathsf{BQP/qpoly} \subseteq \mathsf{PP/poly}$. \ The advantage is that it
is almost entirely self-contained, requiring no ``power tools'' from convex optimization or learning theory. \ On the other hand, the approach does
not give optimal parameters, and we do not know how to prove Theorem \ref{thm2} with it.

Our third approach is via an upper-bound on the so-called \textit{sequential fat-shattering dimension} of quantum states, considered as a hypothesis class
(see, e.g., \cite{rst}). \ In the original quantum PAC-learning theorem
by Aaronson, the key step was to upper-bound the so-called $\varepsilon$-\textit{fat-shattering dimension} of quantum states considered as a hypothesis class. \ Fat-shattering dimension is a real-valued generalization of VC dimension. \ One can then appeal to known results to get a sample-efficient learning algorithm. \ For online learning, however, bounding the fat-shattering dimension no longer suffices; one instead needs to consider a possibly-larger quantity called sequential fat-shattering dimension. \ However, by appealing
to a lower bound due
to~\cite{Nayak99,antv} for a variant of quantum \emph{random access codes\/}, we
are able to upper-bound the sequential fat-shattering dimension of quantum states. \ Using known results---in particular, those due
to \cite{rst}---this implies the regret bound in Theorem \ref{thm2}, up to a multiplicative
factor of~$\log^{3/2} T$. \ The statement that \textit{the hypothesis class of $n$-qubit states has $\varepsilon$-sequential fat-shattering dimension $\operatorname{O}
(n/\varepsilon^2)$} might be of independent interest: among other things, it implies that \emph{any\/} online learning algorithm that works given bounded sequential fat-shattering dimension, will work for online learning of quantum states.
We also give an alternative proof for the lower bound due to Nayak for 
quantum random access codes, and extend it to codes that are decoded by
what we call \emph{measurement decision trees\/}. We expect these also
to be of independent interest.

\subsection{Structure of the paper}

We start by describing background and the technical learning setting as well as notations used throughout
(Section~\ref{sec:definitions}). In Section \ref{sec:online} we give the algorithms and main theorems derived using convexity arguments and online convex optimization. \ In Section \ref{sec:postselection} we state
the main theorem using a postselection algorithm. \ In Section \ref{sec:dimension} we give a sequential fat-shattering dimension bound for quantum states and its implication for online learning of quantum states.
Proofs of the theorems and related claims are presented in the
appendices.

\section{Preliminaries and definitions}  \label{sec:definitions}

We define the trace norm of a matrix~$M$ as~$\left\Vert M \right\Vert
_{\trace} \coloneqq \trace \sqrt{M M^\dagger}$,
where~$M^\dagger$ is the adjoint of~$M$.
We denote the~$i$th eigenvalue of a Hermitian matrix~$X$
by~$\lambda_i(X)$, its minimum eigenvalue by~$\lambda_{\min}(X)$, and
its maximum eigenvalue by~$\lambda_{\max}(X)$.  We sometimes use the 
notation~$X \bullet Y$ to denote the trace 
inner-product~$\trace(X^\dagger Y)$ between two complex matrices of the
same dimensions. By `$\log$' we denote the natural logarithm, unless 
the base is explicitly mentioned.

An~$n$-qubit quantum state~$\rho$ is an element of~$C_n$, where $C_n$ is 
the set of all trace-1 positive
semi-definite (PSD) complex matrices of dimension $2^n$:
$$ C_n = \{ M \in  \mathbb{C}^{2^n \times 2^n} \ , \  M = M^{\dagger}  \ , \ M \succeq 0 \ , \ \trace(M)  = 1 \} \enspace. $$
Note that $C_n$ is a convex set.\ A two-outcome measurement of an~$n$-qubit state is defined by a $2^n \times 2^n$ Hermitian matrix $E$ with eigenvalues in $\lbrack 0, 1\rbrack$.
The measurement $E$ ``accepts'' $\rho$ with probability $\trace(E\rho)$,
and ``rejects'' with probability $1-\trace(E\rho)$. 
For the algorithms we present in this article, we assume that a two-outcome 
measurement is specified via a classical description of its defining
matrix~$E$.
In the rest of the article, unless mentioned otherwise, a ``measurement'' 
refers to a ``two-outcome measurement''.
We refer the reader to the book by \cite{Watrous18} for a more thorough
introduction to the relevant concepts from quantum information.

\paragraph{Online learning and regret.}  In online learning of quantum states, 
we have a sequence of iterations~$t = 1, 2, 3, \dotsc$ of the following form.
First, the learner constructs a state $\omega_t \in C_n$; we say that the 
learner ``predicts''~$\omega_t$. It then suffers a
``loss''~$\ell_t(\trace(E_t\omega_t))$ that depends on a measurement $E_t$,
both of which are
presented by an adversary. \ Commonly used loss functions are $L_2$
loss (also called ``mean square error''), given by
$$
\ell_t(z) \coloneqq (z - b_t)^2 \enspace,
$$
and $L_1$ loss (also called ``absolute loss''), given by
\[
\ell_t(z) \coloneqq \left| z - b_t \right| \enspace,
\]
where~$b_t \in [0,1]$. The parameter~$b_t$ may be an approximation 
of~$\trace(E_t \rho)$ for some fixed quantum state~$\rho$ not known to the 
learner, obtained by measuring multiple copies of~$\rho$.
However, in general, the parameter is allowed to be arbitrary.

The learner then ``observes'' feedback from the measurement~$E_t$;
the feedback is also provided by the adversary.
The simplest feedback is the realization of a binary random variable $Y_t$ such that
$$ Y_t = \mycases {1}{with probability $\trace(E_t \rho)$ \enspace,
\qquad and}{0}{with probability $1 - \trace(E_t \rho )$ \enspace.}
$$
Another common feedback is a number~$b_t$ as described above, especially 
in case that the learner suffers $L_1$ or $L_2$ loss.

We would like to design a strategy for updating $\omega_t$ based on the
loss, measurements, and feedback in all the
iterations so far, so that the learner's total loss is minimized in the 
following sense. We would like that over~$T$ iterations (for a
number~$T$ known in advance), the learner's total loss is not much more 
than that of the hypothetical strategy of outputting
the same quantum state $\varphi$ at every time step, where~$\varphi$
minimizes the total loss \emph{with perfect hindsight\/}.
Formally this is captured by the notion of \textit{regret\/}~$R_T$, defined as
$$ R_T \coloneqq \sum_{t=1}^T \ell_t(\trace(E_t \omega_t))
    - \min_{\varphi \in C_n} \sum_{t=1}^T \ell_t(\trace(E_t \varphi))
    \enspace.
$$
The sequence of measurements~$E_t$ can be arbitrary, even adversarial, based on the learner's previous actions.
Note that if the loss function is given by a fixed state~$\rho$ (as in the case
of mean square error), the minimum total loss would be~$0$. This is
called the ``realizable'' case.  However, in general, the loss function
presented by the adversary need not be consistent with any quantum state.
This is called the ``non-realizable'' case.

A special case of the online learning setting is called \textit{agnostic learning}; here the measurements~$E_t$ are drawn from a fixed and unknown distribution $\D$. \ The setting is called ``agnostic'' because we still do
not assume that the losses correspond to any actual state $\rho$ (i.e.,
the setting may be non-realizable).

\paragraph{Online mistake bounds.}
In some online learning scenarios the quantity of interest is not the
mean square error, or some other convex loss, but rather simply the
total number of ``mistakes'' made. \ For example, we may be interested in
the number of iterations in which the predicted probability of
acceptance~$\trace(E_t \omega_t)$ is more than $\varepsilon$-far from the
actual value~$\trace(E_t \rho)$,
where~$\rho$ is again a fixed state not known to the learner. \ More formally, let
$$
\ell_t(\trace(E_t\omega_t)) \coloneqq |\trace(E_t\omega_t)-\trace(E_t\rho)|
$$
be the absolute loss function.
Then the goal is to bound the number of iterations in which
$\ell_t(\trace(E_t\omega_t)) > \varepsilon$, regardless of the sequence of
measurements~$E_t$ presented by the adversary. \ We assume that in this setting,the adversary provides as feedback an approximation~$b_t \in [0,1]$ that satisfies $|\trace(E_t\rho)-b_t| \leq \frac{\varepsilon}{3}$.

\section{Online learning of quantum states}  \label{sec:online}

In this section, we use techniques from online convex optimization to
minimize regret. The same algorithms may be adapted to also minimize the
number of mistakes made.

\subsection{Regularized Follow-the-Leader}

We first follow the template of the Regularized Follow-the-Leader 
algorithm~(RFTL; see, for example, \cite[Chapter~5]{oco}).
The algorithm below makes use of von Neumann entropy, which relates to the Matrix Exponentiated Gradient algorithm (\cite{tsuda}).

\begin{algorithm}[H] 
\caption{RFTL for Quantum Tomography}
\begin{algorithmic}[1] \label{alg:rftl}
\STATE Input: $T$, $\K \coloneqq C_n$, $\eta < \frac{1}{2}$
\STATE Set $\omega_1  \coloneqq 2^{-n} \mathbbm{I}$.
\FOR{$t = 1, \ldots, T$}
\STATE Predict $\omega_t$. Consider the convex and $L$-Lipschitz loss
function $\ell_t:\reals \to \reals$ given by measurement $E_t:
\ell_t(\trace(E_t \varphi)).$  Let $\ell_t'(x)$ be a sub-derivative of $\ell_t$ with respect to $x$. Define
$$
\nabla_t\coloneqq \ell_t'(\mathrm{Tr}(E_t \omega_t))E_t \enspace.
$$

\STATE Update decision according to the RFTL rule with von Neumann entropy:
\begin{equation} \label{eqn:regret1}
 \omega_{t+1} \coloneqq \argmin _{\varphi \in \K}  \left\{  \eta\sum_{s = 1}^t \trace(\nabla_s \varphi) +  \sum_{i=1}^{2^n} \lambda_i (\varphi) \log \lambda_i(\varphi) \right\}
\enspace.
\end{equation}
\ENDFOR
\end{algorithmic}
\end{algorithm}

\paragraph{Remark 1:}  The mathematical program in Eq.~\eqref{eqn:regret1} is convex, and thus can be solved in polynomial time in the dimension, which is $2^n$.

\begin{theorem} \label{thm:main}
Setting $\eta = \sqrt{\frac{(\log 2)n}{2TL^2}}$~, the regret of
Algorithm~\ref{alg:rftl} is bounded by $2L \sqrt{(2 \log 2) T n}$~.
\end{theorem}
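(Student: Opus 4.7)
The plan is to apply the standard analysis of Regularized Follow-the-Leader, instantiated for the regularizer $R(\varphi) \coloneqq \sum_{i=1}^{2^n} \lambda_i(\varphi)\log\lambda_i(\varphi)$ (the negative von Neumann entropy) on the convex domain $\K = C_n$ of density matrices, with dual pairing given by the trace inner product. The first step is the textbook linearization: since each $\ell_t$ is convex with sub-derivative $\ell_t'$, we have $\ell_t(\trace(E_t\omega_t)) - \ell_t(\trace(E_t\varphi)) \le \ell_t'(\trace(E_t\omega_t))\cdot\trace(E_t(\omega_t-\varphi)) = \trace(\nabla_t(\omega_t-\varphi))$, so bounding the regret of Algorithm~\ref{alg:rftl} reduces to bounding the regret of an online linear optimization game with losses $\varphi\mapsto\trace(\nabla_t\varphi)$ over $\K$.

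Next I would invoke the generic RFTL regret bound: for any regularizer $R$ that is $\sigma$-strongly convex with respect to a norm $\|\cdot\|$ on $\K$, with $\|\nabla_t\|_*$ the corresponding dual norm of the loss gradients, RFTL (with the first iterate at the minimizer of $R$) satisfies
\[
\sum_{t=1}^T \trace(\nabla_t(\omega_t - \varphi^*)) \;\le\; \frac{R(\varphi^*) - R(\omega_1)}{\eta} + \frac{\eta}{\sigma}\sum_{t=1}^T \|\nabla_t\|_*^2.
\]
Proving this in the Hermitian-matrix setting is exactly the step flagged in the introduction as needing care: one has to check that the mean-value / Taylor step used to bound the stability term $\trace(\nabla_t(\omega_t-\omega_{t+1}))$ still goes through when the variables are complex Hermitian matrices. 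I would do this by applying the usual argument to the real-valued function $R$ restricted to the real one-dimensional segment from $\omega_t$ to $\omega_{t+1}$ inside $C_n$, so the second-order expansion reduces to the real case; strong convexity of $R$ along this segment then upper-bounds the stability term by $\eta\|\nabla_t\|_*^2/\sigma$ via the usual conjugate-norm (Fenchel–Young) argument.

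The final ingredients are three concrete estimates. First, the range of $R$ on $C_n$: the von Neumann entropy is nonnegative and maximized by the maximally mixed state, so $R(\omega_1) = -n\log 2$ and $R(\varphi^*)\le 0$, giving $R(\varphi^*)-R(\omega_1)\le n\log 2$; note this is also why we initialize at $\omega_1 = 2^{-n}\mathbbm{I}$, the minimizer of $R$. Second, the quantum Pinsker inequality $S(\rho\|\sigma)\ge \tfrac{1}{2}\|\rho-\sigma\|_{\trace}^2$ tells us that $R$ is $\tfrac{1}{2}$-strongly convex with respect to the trace norm, so we may take $\sigma = \tfrac{1}{2}$ and $\|\cdot\|_* = \|\cdot\|_\infty$ (the operator norm). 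Third, since $E_t$ has eigenvalues in $[0,1]$ and $|\ell_t'|\le L$, we get $\|\nabla_t\|_\infty \le L$.

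Combining these estimates yields a regret upper bound of $\tfrac{n\log 2}{\eta} + 2\eta T L^2$. Setting $\eta = \sqrt{(\log 2)n/(2TL^2)}$ balances the two terms and produces the stated bound $2L\sqrt{(2\log 2)\, Tn}$. The main obstacle is the one I highlighted above: verifying that the standard RFTL stability/Bregman-divergence machinery extends cleanly to complex Hermitian matrices, since the usual convex-analysis proofs are written over $\reals^d$; aside from this, the computation is essentially an exercise in reading off the entropic regularizer's diameter, its strong-convexity constant (matrix Pinsker), and the operator-norm bound on $\nabla_t$.
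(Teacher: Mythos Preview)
Your proposal is correct and follows essentially the same route as the paper: linearize via convexity of $\ell_t$, run the RFTL/FTL--BTL analysis with the negative von Neumann entropy as regularizer, invoke quantum Pinsker to get $\tfrac12$-strong convexity in trace norm, bound the dual (spectral) norm of $\nabla_t$ by $L$, and balance the resulting $\tfrac{n\log 2}{\eta} + 2\eta T L^2$ to obtain $2L\sqrt{(2\log 2)\,Tn}$. The paper makes explicit precisely the step you flag as the obstacle, by proving two auxiliary claims---that every iterate $\omega_t$ is strictly positive definite (so $\log\omega_t$ and hence $\nabla R$ and the Bregman divergence are well-defined) and that the first-order optimality condition $\nabla\Phi_t(\omega_{t+1})\bullet(\omega_t-\omega_{t+1})\ge 0$ holds over the Hermitian domain---rather than invoking a black-box strong-convexity RFTL bound.
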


\paragraph{Remark 2:} In the case where the feedback is an independent
random variable~$Y_t$, where~$Y_t = 0$ with probability $1 -
\trace(E_t \rho)$ and $Y_t = 1$ with probability $\trace(E_t \rho)$ for
a fixed but unknown state~$ \rho$, we define~$\nabla_t$ in 
Algorithm~\ref{alg:rftl} as~$\nabla_t \coloneqq 
2 (\trace(E_t \omega_t) - Y_t) E_t$.
Then~$\E[\nabla_t]$ is the gradient of the~$L_2$ loss function where we
receive precise feedback $\trace(E_t \rho)$ instead of~$Y_t$. It follows
from the proof of Theorem~\ref{thm:main} that the expected $L_2$ regret
of Algorithm \ref{alg:rftl}, namely
\[
\E \left[\sum_{t=1}^T (\trace(E_t \omega_t) - \trace(E_t \rho))^2 \right]
\enspace,
\]
is bounded by $\operatorname{O}(\sqrt{Tn}\,)$. 

The proof of Theorem~\ref{thm:main} appears in Appendix B. The proof is along the lines of \cite[Theorem~5.2]{oco}, except that the loss function does not take a raw state as input, and our domain for optimization is complex. Therefore, the mean value theorem does not hold, which means we need to approximate the Bregman divergence instead of replacing it by a norm as in the original proof.  Another subtlety is that convexity needs to be carefully defined with respect to the complex domain.

\subsection{Matrix Multiplicative Weights}
\label{sec-mmw}

The Matrix Multiplicative Weights (MMW) algorithm~\citep{tsuda} provides an alternative means of proving Theorem~\ref{thm2}. The algorithm follows the template of Algorithm~\ref{alg:rftl} with step 5 replaced by the following update rule:
\begin{equation} \label{eq:mmw-update}
	 \omega_{t+1} \coloneqq \frac{\exp(-\tfrac{\eta}{L}\textstyle\sum_{\tau=1}^t \nabla_\tau )}{\trace(\exp(-\tfrac{\eta}{L}\textstyle\sum_{\tau=1}^t \nabla_\tau ))}
\enspace.
\end{equation}
In the notation of~\cite{arora-kale}, this algorithm is derived using
the loss matrices $M_t = \frac{1}{L}\nabla_t =
\frac{1}{L}\ell_t'(\trace(E_t \omega_t))E_t$. Since $\|E_t\| \leq 1$ and
$|\ell_t'(\trace(E_t \omega_t))| \leq L$, we have $\|M_t\| \leq 1$, as required in the analysis of the Matrix Multiplicative Weights algorithm. We have the following regret bound for the algorithm (proved in Appendix C):
\begin{theorem} \label{thm:main-mmw}
Setting $\eta = \sqrt{\frac{(\log 2)n}{4T}}$, the regret of the algorithm based on the update rule \eqref{eq:mmw-update} is bounded by $2L\sqrt{(\log 2)Tn}$.
\end{theorem}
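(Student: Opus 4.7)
The plan is to reduce regret with respect to a convex loss to regret with respect to a linear loss (the standard subgradient trick from online convex optimization), and then apply the standard Matrix Multiplicative Weights potential-function argument over the $2^n$-dimensional spectrum.

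First, since $\ell_t$ is convex, for any target state $\varphi\in C_n$ the first-order inequality
\[
\ell_t(\trace(E_t\omega_t)) - \ell_t(\trace(E_t\varphi)) \;\le\; \ell_t'(\trace(E_t\omega_t))\,\bigl(\trace(E_t\omega_t)-\trace(E_t\varphi)\bigr) \;=\; \nabla_t \bullet (\omega_t - \varphi)
\]
holds. Summing over $t$ and taking the infimum of the right-hand side over $\varphi\in C_n$ upper-bounds $R_T$ by the \emph{linearized} regret $\sum_{t=1}^T \nabla_t\bullet(\omega_t-\varphi^{\star})$, where $\varphi^{\star}$ is the best state in hindsight. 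It thus suffices to bound this latter quantity.

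Next, I would normalize by setting $M_t \coloneqq \nabla_t / L$. Since $\|E_t\|\le 1$ and $|\ell_t'|\le L$, we have $\|M_t\|\le 1$, and the update rule in Eq.~\eqref{eq:mmw-update} coincides with the standard MMW update $\omega_{t+1}\propto\exp(-\eta\sum_{\tau\le t}M_\tau)$. Now run the standard potential argument: define $\Phi_t \coloneqq \log\trace\bigl(\exp(-\eta\sum_{\tau\le t}M_\tau)\bigr)$, so $\Phi_0 = \log(2^n) = n\log 2$. Using Golden--Thompson together with the scalar bound $e^{-\eta x}\le 1-\eta x+\eta^2 x^2$ valid for $|\eta x|\le 1$, one obtains the per-step decrement
\[
\Phi_t-\Phi_{t-1} \;\le\; -\eta\, M_t\bullet\omega_t + \eta^2\, (M_t^2)\bullet\omega_t \;\le\; -\eta\, M_t\bullet\omega_t + \eta^2.
\]
Telescoping and using $\Phi_T \ge -\eta\,\lambda_{\min}\!\bigl(\sum_t M_t\bigr) = -\eta\min_{\varphi\in C_n}\sum_t M_t\bullet\varphi$ yields
\[
\sum_{t=1}^T M_t\bullet\omega_t - \min_{\varphi\in C_n}\sum_{t=1}^T M_t\bullet\varphi \;\le\; \eta T + \frac{n\log 2}{\eta}.
\]
Multiplying through by $L$ and tuning $\eta$ as prescribed gives a bound of order $L\sqrt{Tn\log 2}$, yielding Theorem~\ref{thm:main-mmw} after routine arithmetic.

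The main obstacle---a standard one---is the matrix step: the non-commutativity of the $\nabla_\tau$'s obstructs the naive scalar manipulation of $\exp(-\eta \sum_\tau M_\tau)$, so one must invoke Golden--Thompson to linearize the trace of the updated matrix exponential. Verifying that the quadratic remainder estimate $e^{-\eta x}\le 1-\eta x + \eta^2 x^2$ carries through at the matrix level (via spectral calculus on $\omega_t$ after applying Golden--Thompson) is what ties the choice of $\eta$ to the final $L\sqrt{Tn}$ bound; everything else is bookkeeping together with the convexity reduction in the opening paragraph.
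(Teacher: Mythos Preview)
Your proposal is correct and follows essentially the same approach as the paper: linearize via convexity of $\ell_t$, then invoke the Matrix Multiplicative Weights regret bound with $M_t=\nabla_t/L$ and $M_t^2\bullet\omega_t\le 1$. The only difference is cosmetic---the paper cites the Arora--Kale MMW bound as a black box, whereas you unpack it inline via the Golden--Thompson potential argument.
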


\subsection{Proof of Theorem~\ref{thm1}}





Consider either the RFTL or MMW based online learning algorithm
described in the previous subsections, with the $1$-Lipschitz convex 
absolute loss function $\ell_t(x) = |x - b_t|$.
We run the algorithm in a sub-sequence of
the iterations, using only the measurements presented in those
iterations. 
The subsequence of iterations is determined as follows.
Let~$\omega_t$ denote the hypothesis maintained by the algorithm in 
iteration~$t$. We run the algorithm in iteration~$t$ if
$\ell_t(\trace(E_t \omega_t)) > \frac{2\varepsilon}{3}$.
Note that whenever $|\trace(E_t \omega_t) - \trace(E_t\rho)|
> \varepsilon$, we have  $\ell_t(\trace(E_t \omega_t))
> \frac{2\varepsilon}{3}$, so we
update the hypothesis according to the RFTL/MMW rule in that iteration.

As we explain next, the algorithm makes at most $\operatorname{O}(\tfrac{n}{\varepsilon^2})$ 
updates regardless of the number of measurements presented (i.e.,
regardless of the number of iterations),
giving the required mistake bound. For the true quantum state $\rho$, we have $\ell_t(\trace(E_t\rho)) < \frac{\varepsilon}{3}$ for all $t$. Thus if the algorithm makes $T$ updates
(i.e., we run the algorithm in~$T$ of the iterations), the regret bound implies that $\frac{2\varepsilon}{3}T \leq \frac{\varepsilon}{3}T + \operatorname{O}(\sqrt{Tn}\,)$. Simplifying, we get the bound $T = \operatorname{O}(\tfrac{n}{\varepsilon^2})$, as required.

\section{Learning Using Postselection} \label{sec:postselection}

In this section, we give a direct route to proving a slightly weaker version of Theorem \ref{thm1}: one that does not need the tools of convex optimization, but only tools intrinsic to quantum information.

In the following, by a ``register'' we mean a designated sequence of qubits.
Given a two-outcome measurement~$E$ on~$n$-qubits states, we define 
an operator~${\mathcal M}$ that ``postselects'' on acceptance by~$E$.
(While a measurement results in a random outcome distributed according
to the probability of acceptance or rejection, \emph{postselection\/} is a
hypothetical operation that produces an outcome of one's choice with
certainty.)
Let~$U$ be any unitary operation on~$n+1$ qubits
that maps states of the form~$|\psi\rangle |0\rangle $ to~$\sqrt{E}\, 
|\psi\rangle |0\rangle + \sqrt{\mathbbm{I} - E}\, |\psi\rangle |1\rangle $. 
Such a unitary operation always exists (see, e.g.,
\cite[Theorem~2.42]{Watrous18}). 
Denote the~$(n+1)$th qubit by register~$B$.
Let~$\Pi \coloneqq \mathbbm{I} \otimes |0\rangle\! \langle 0|$
be the orthogonal projection onto states that equal~$|0\rangle$ in 
register~$B$. Then we define the operator~${\mathcal M}$ as
\begin{equation}
{\mathcal M}(\varphi) \coloneqq
    \frac{1}{\trace(E\varphi)} \;
    \trace_B \!\left( U^{-1} \Pi U \left( \varphi \otimes |0\rangle\!
    \langle 0| \right) U^{-1} \Pi U \right) \enspace,
\end{equation}
if~$\trace(E\varphi) \neq 0$, and~${\mathcal M}(\varphi) \coloneqq 0$
otherwise. Here, $\trace_B$ is the \emph{partial trace\/} operator over
qubit~$B$~\cite[Section~1.1]{Watrous18}.
This operator~${\mathcal M}$ has the effect of mapping the quantum 
state~$\varphi$ to
the (normalized) post-measurement state when we perform the measurement~$E$
and get outcome~``yes'' (i.e., the measurement ``accepts'').
We emphasize that we use a fresh ancilla qubit initialized
to state~$|0\rangle$ as
register~$B$ in every application of the operator~$\mathcal{M}$.
We say that the postselection succeeds with probability~$\trace(E\varphi)$.

We need a slight variant of a well-known result, which Aaronson called the \textquotedblleft Quantum Union Bound\textquotedblright\ (see, for example, \cite{aar:qmaqpoly,aarbados,wilde}).
\begin{theorem}
[variant of Quantum Union Bound; \cite{Gao15}]\label{qub}
Suppose we have a sequence of two-outcome measurements $E_{1},\ldots,E_{k}$,
such that each $E_{i}$\ accepts a certain mixed state $\varphi$\ with 
probability at least $1-\varepsilon$. \
Consider the corresponding operators~${\mathcal M}_1, {\mathcal M}_2,
\dotsc, {\mathcal M}_k$ that postselect on acceptance by the respective 
measurements~$E_1, E_2, \dotsc, E_k$.
Let~$\widetilde{\varphi}$ denote the state~$({\mathcal M}_k {\mathcal M}_{k-1}
\dotsb {\mathcal M}_1)(\varphi)$ obtained by applying each of the~$k$ 
postselection operations in succession.
Then the probability that all
the postselection operations succeed, i.e., the~$k$ measurements all
accept~$\varphi$, is at least~$1 - 2\sqrt{k\varepsilon}$. Moreover,
$\left\Vert \widetilde{\varphi
}-\varphi\right\Vert _{\trace}\leq 4 \sqrt{k\varepsilon}$.
\end{theorem}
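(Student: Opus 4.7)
The plan is to follow the purification/projection approach underlying Gao's improvement of the quantum union bound. First, I would introduce a purifying register $R$ and a pure state $|\psi\rangle$ on the $n$-qubit register together with $R$ such that $\trace_R(|\psi\rangle\langle\psi|) = \varphi$. Each postselection $\mathcal{M}_i$ can be represented, after appending a fresh ancilla $B_i$ in state $|0\rangle$ and applying the dilation unitary $U_i$ from the definition, as the orthogonal projection $P_i \coloneqq U_i^{-1}\bigl(\mathbbm{I} \otimes |0\rangle\langle 0|\bigr) U_i$ on the joint system. The probability that $E_i$ accepts equals $\|P_i|\psi\rangle\|^2 = \langle \psi | P_i | \psi \rangle \geq 1-\varepsilon$, so $\|(\mathbbm{I}-P_i)|\psi\rangle\|^2 \leq \varepsilon$.

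Next I would analyze the sequence of unnormalized projected vectors $|\psi_0\rangle \coloneqq |\psi\rangle$ and $|\psi_i\rangle \coloneqq P_i |\psi_{i-1}\rangle$. The probability that all $k$ postselections succeed is exactly $\||\psi_k\rangle\|^2$, and the purification of $\widetilde{\varphi}$ is the normalization $|\psi_k\rangle/\||\psi_k\rangle\|$. By the Pythagorean identity, $\||\psi_{i-1}\rangle\|^2 - \||\psi_i\rangle\|^2 = \|(\mathbbm{I}-P_i)|\psi_{i-1}\rangle\|^2$. Using the triangle inequality and contractivity of projections, each term is bounded as
\[
\|(\mathbbm{I}-P_i)|\psi_{i-1}\rangle\| \;\leq\; \|(\mathbbm{I}-P_i)|\psi\rangle\| + \||\psi_{i-1}\rangle - |\psi\rangle\| \;\leq\; \sqrt{\varepsilon} + \||\psi_{i-1}\rangle - |\psi\rangle\|.
\]
Summing the increments and applying Cauchy–Schwarz gives a recursive inequality in which $\delta \coloneqq \||\psi_k\rangle - |\psi\rangle\|$ satisfies something of the form $\delta^2 \leq k\varepsilon + 2\sqrt{k\varepsilon}\,\delta$, from which solving the quadratic yields $\delta = \operatorname{O}(\sqrt{k\varepsilon})$ and in particular $\||\psi_k\rangle\|^2 \geq 1 - 2\sqrt{k\varepsilon}$, giving claim (a).

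Finally, I would deduce the trace distance bound. The purification of $\widetilde{\varphi}$ is $|\widetilde{\psi}\rangle \coloneqq |\psi_k\rangle/\||\psi_k\rangle\|$, and a standard estimate shows $\||\widetilde{\psi}\rangle - |\psi\rangle\| \leq 2\||\psi_k\rangle - |\psi\rangle\|$ once $\||\psi_k\rangle\| \geq 1/2$, which is guaranteed when $4\sqrt{k\varepsilon} \leq 1$ (the statement is trivial otherwise). Then the Fuchs–van de Graaf inequality for pure states gives $\|\,|\widetilde{\psi}\rangle\langle\widetilde{\psi}| - |\psi\rangle\langle\psi|\,\|_{\trace} \leq 2\||\widetilde{\psi}\rangle - |\psi\rangle\|$, and the partial trace over $R, B_1, \ldots, B_k$ is trace-norm contractive, yielding $\|\widetilde{\varphi} - \varphi\|_{\trace} \leq 4\sqrt{k\varepsilon}$.

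I expect the main obstacle to be the sharp $\sqrt{k\varepsilon}$ scaling in the norm-deviation step: a naive bound from stringing together $k$ applications of the pure-state gentle-measurement estimate $\|(\mathbbm{I}-P_i)|\psi\rangle\| \leq \sqrt{\varepsilon}$ gives only $k\sqrt{\varepsilon}$, which is too weak. The improvement requires accounting for cancellations across steps via the quadratic-in-$\delta$ estimate above, rather than a plain triangle-inequality telescoping. Once this central inequality is in place, both conclusions follow by the contractions and inequalities described.
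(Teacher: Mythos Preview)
The paper does not prove this theorem from scratch: it only supplies the reduction to Gao's Theorem~1. Namely, it appends fresh ancillas $B_1,\dotsc,B_k$ in state $|0\rangle$, replaces each $E_i$ by the orthogonal projection $P_i \coloneqq U_i^{-1}\Pi_i U_i$ acting on the system together with $B_i$, observes that $\langle\psi|P_i|\psi\rangle = \trace(E_i\varphi) \ge 1-\varepsilon$, and then quotes Gao's bound for sequential projective measurements as a black box. Your setup---purify $\varphi$, dilate each $E_i$ to a projection, analyze $|\psi_i\rangle = P_i|\psi_{i-1}\rangle$---is exactly this reduction, and your final step (normalize, Fuchs--van de Graaf, contract the partial trace over $R,B_1,\dotsc,B_k$) is correct.

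Where you depart from the paper is in attempting to prove the projection bound itself, and here there is a genuine gap. The quadratic $\delta^2 \le k\varepsilon + 2\sqrt{k\varepsilon}\,\delta$ is asserted but does not follow from the ingredients you list: combining $\|(\mathbbm{I}-P_i)|\psi_{i-1}\rangle\| \le \sqrt{\varepsilon} + d_{i-1}$ with a triangle-inequality telescope gives $d_i \le 2d_{i-1} + \sqrt{\varepsilon}$, which blows up exponentially, and Cauchy--Schwarz on $\sum_i (\mathbbm{I}-P_i)|\psi_{i-1}\rangle$ only relates $\delta^2$ to $k(1-\||\psi_k\rangle\|^2)$, not to $\delta$ itself. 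Even if your quadratic did hold, its solution is $\delta \le (1+\sqrt{2})\sqrt{k\varepsilon}$, so you would not recover the constants $2$ and $4$ in the statement. The fix is to use orthogonality rather than the triangle inequality: writing
\[
|\psi_i\rangle - |\psi\rangle \;=\; P_i\bigl(|\psi_{i-1}\rangle - |\psi\rangle\bigr) \;-\; (\mathbbm{I}-P_i)|\psi\rangle,
\]
the two summands lie in the ranges of $P_i$ and $\mathbbm{I}-P_i$ respectively and are therefore orthogonal, giving $d_i^2 \le d_{i-1}^2 + \varepsilon$ and hence $d_k \le \sqrt{k\varepsilon}$ in one line. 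From this, $\||\psi_k\rangle\|^2 \ge (1-\sqrt{k\varepsilon})^2 \ge 1 - 2\sqrt{k\varepsilon}$ and $\||\widetilde\psi\rangle - |\psi\rangle\| \le 2\sqrt{k\varepsilon}$ follow with the stated constants.
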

We may infer the above theorem by applying Theorem~1 from~(\cite{Gao15})
to the state~$\varphi$ augmented with~$k$ ancillary qubits~$B_1, B_2,
\dotsc, B_k$ initialized to~$0$, and considering~$k$ orthogonal projection 
operators~$U^{-1}_i \Pi_i U_i$, where the unitary operator~$U_i$ and the
projection operator~$\Pi_i$ are as defined for the postselection 
operation~$\mathcal{M}_i$ for~$E_i$. The~$i$th projection
operator~$U^{-1}_i \Pi_i U_i$ acts on the registers holding~$\varphi$ and 
the~$i$th ancillary qubit~$B_i$.

We prove the main result of this section using suitably defined 
postselection operators in an online learning algorithm (proof in Appendix D):
\begin{theorem} \label{thm:postselection-main}
Let $\rho$\ be an unknown $n$-qubit mixed state, let $E_{1},E_{2},\ldots$\ be
a sequence of two-outcome measurements, and let $\varepsilon>0$. \ There exists a
strategy for outputting hypothesis states $\omega_{0},\omega_{1},\ldots$,
where $\omega_{t}$\ depends only on $E_{1},\ldots,E_{t}$ and real numbers
$b_{1},\ldots,b_{t}$ in~$[0,1]$, such that as long as $\left\vert b
_{t}-\trace \!\left(  E_{t}\rho\right)  \right\vert \leq\varepsilon/3$
for every $t$, we have%
\[
\left\vert \trace \!\left(  E_{t+1}\omega_{t}\right)
-\trace \!\left(  E_{t+1}\rho\right)  \right\vert >\varepsilon
\]
for at most $\operatorname{O}\!\!\left(  \frac{n}{\varepsilon^{3}}\log\frac{n}{\varepsilon
}\right)  $\ values of $t$. \ Here the $E_{t}$'s and $b_{t}$'s can
otherwise be chosen adversarially.
\end{theorem}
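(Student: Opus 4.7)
The plan is to build on the natural postselection-based online algorithm together with a multiplicative potential argument. Initialize $\omega_0 \coloneqq \mathbbm{I}/2^n$, and at step $t$ predict $\omega_{t-1}$. Upon receiving $(E_t, b_t)$: if $|\trace(E_t\omega_{t-1}) - b_t| \leq 2\varepsilon/3$, set $\omega_t \coloneqq \omega_{t-1}$; otherwise (call this an ``update step''), use the sign of the discrepancy to pick $F_t \in \{E_t, \mathbbm{I} - E_t\}$ for which $\trace(F_t\rho) > \trace(F_t\omega_{t-1}) + \varepsilon/3$ (such an $F_t$ exists because $|b_t - \trace(E_t\rho)| \leq \varepsilon/3$), and then set $\omega_t$ to be the state obtained from $\omega_{t-1}$ by postselecting on acceptance by a suitably amplified version of $F_t$. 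Every true mistake---that is, every $t$ with $|\trace(E_t\omega_{t-1}) - \trace(E_t\rho)| > \varepsilon$---triggers an update step, so it suffices to bound the number of updates.

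The potential I would use is $\Phi_t \coloneqq \trace(\rho\omega_t)$, which starts at $2^{-n}$ and is bounded above by $1$. Hence if I can show that every update multiplies $\Phi_t$ by at least $1+\delta$ for some $\delta = \Omega(\varepsilon^3/\log(n/\varepsilon))$, the total number of updates is at most $n\log 2 / \log(1+\delta) = \operatorname{O}((n/\varepsilon^3)\log(n/\varepsilon))$. To estimate $\Phi_t/\Phi_{t-1}$, I would pass to the \emph{unnormalized} postselection map $\varphi \mapsto \sqrt{F_t}\varphi\sqrt{F_t}$, which is linear. Writing $\omega_0 = 2^{-n}\rho + (1-2^{-n})\sigma$ with $\sigma = (\mathbbm{I}-\rho)/(2^n-1)$, and iterating the unnormalized map on both summands, gives $\tilde\omega_t = 2^{-n}\tilde\rho_t + (1-2^{-n})\tilde\sigma_t$, whence $\Phi_t \geq 2^{-n}\trace(\rho\tilde\rho_t)/\trace(\tilde\omega_t)$. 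The denominator $\trace(\tilde\omega_t) = \prod_s \trace(F_s\omega_{s-1}) \leq \prod_s (\trace(F_s\rho) - \varepsilon/3)$ shrinks because of the update gap, while the numerator can be lower-bounded by roughly $\prod_s \trace(F_s\rho)$ times an error term from Theorem~\ref{qub}, provided each $F_s$ accepts $\rho$ with probability close to $1$.

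The main obstacle is exactly this last hypothesis of Theorem~\ref{qub}: after $T$ updates, the union bound requires each postselection to accept $\rho$ with probability at least $1 - \operatorname{O}(1/T)$ in order to yield a non-trivial $1 - 2\sqrt{T\eta}$ factor. This cannot be assumed of the raw $F_s \in \{E_t, \mathbbm{I}-E_t\}$, so one has to introduce an amplification step---e.g., adjoining $\Theta(\varepsilon^{-2}\log T)$ ancilla qubits and applying a two-outcome measurement that, on input $\rho$ tensored with the ancilla initialization, accepts with probability $1 - 1/\operatorname{poly}(T)$ while still accepting $\omega_{s-1}$ tensored with the ancilla with probability at least $\Omega(\varepsilon^2)$ smaller. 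Propagating this reduced $\Omega(\varepsilon^2)$ gap together with the $\Theta(\log(n/\varepsilon))$ ancilla overhead through the potential inequality yields the per-update multiplicative gain $1 + \Omega(\varepsilon^3/\log(n/\varepsilon))$. Everything else---the trigger condition, the linearity of the unnormalized map, the union-bound bookkeeping, and the cap $\Phi_t \leq 1$---is routine.
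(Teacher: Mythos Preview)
Your overall architecture---write $\omega_0$ as a mixture containing $\rho$, push both pieces through the unnormalized postselection maps, and compare upper and lower bounds on the success probability---is exactly the paper's. But the amplification step you propose cannot work. Adjoining ancilla qubits in a \emph{fixed} state $|\text{anc}\rangle$ does nothing: any two-outcome measurement on $\varphi\otimes|\text{anc}\rangle$ is equivalent to a POVM element $G$ on the $n$-qubit register alone, so its acceptance probability $\trace(G\varphi)$ is still linear in $\varphi$. Given only that $\trace(F_t\rho)-\trace(F_t\omega_{t-1})>\varepsilon/3$, with $\trace(F_t\rho)$ possibly around $1/2$, no such $G$ built from $F_t$ and $b_t$ can simultaneously satisfy $\trace(G\rho)\geq 1-1/\operatorname{poly}(T)$ and $\trace(G\rho)-\trace(G\omega_{t-1})=\Omega(\varepsilon^2)$: the only freedom is an affine rescaling $G=(1-\gamma)\mathbbm{I}+\gamma F_t$, which shrinks the gap to $\gamma\varepsilon/3$, and forcing $\trace(G\rho)\geq 1-\eta$ forces $\gamma=O(\eta)$, so the gap becomes $O(\eta\varepsilon)$---far too small. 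The paper's fix is not ancillas but $k=\Theta(\varepsilon^{-2}\log(n/\varepsilon))$ \emph{copies}: the hypothesis $\omega_t^\ast$ lives on $kn$ qubits, the target is $\rho^{\otimes k}$, and the amplified measurement $E_t^\ast$ checks whether the empirical fraction of $E_t$-acceptances over the $k$ registers lies within $\varepsilon/2$ of $b_t$. Hoeffding on the product state $\rho^{\otimes k}$ then gives $\trace(E_t^\ast\rho^{\otimes k})\geq 1-1/\operatorname{poly}(n/\varepsilon)$, while a ``bad'' step is \emph{defined} by $\trace(E_t^\ast\omega_{t-1}^\ast)<1-\varepsilon/6$. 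The $n$-qubit hypothesis is the single-register marginal of $\omega_t^\ast$, and its symmetry under register permutations is what ties $\trace(E_t\omega_t)$ to $\trace(E_t^\ast\omega_t^\ast)$ in claim~(ii).

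There is a second, smaller issue: your potential $\Phi_t=\trace(\rho\omega_t)$ is awkward for mixed $\rho$. The Quantum Union Bound only gives $\|\tilde\rho_t/\trace(\tilde\rho_t)-\rho\|_{\trace}\leq 4\sqrt{T\eta}$, so your numerator estimate becomes $\trace(\rho\tilde\rho_t)\geq\trace(\tilde\rho_t)\bigl(\trace(\rho^2)-O(\sqrt{T\eta})\bigr)$, which is vacuous once $\trace(\rho^2)$ is small. The paper sidesteps this by tracking not the overlap but simply the total acceptance probability $p=\trace(\tilde\omega_t^\ast)$: linearity of the unnormalized map plus the decomposition $\mathbbm{I}/D=D^{-1}\rho^{\otimes k}+(1-D^{-1})\xi$ gives $p\geq D^{-1}\trace(\tilde\rho_t^\ast)\geq 0.9/D$ directly, with no purity factor, while the bad-step definition gives $p\leq(1-\varepsilon/6)^\ell$. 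Combining these two bounds on the single scalar $p$ yields $\ell=O(kn/\varepsilon)=O\!\bigl((n/\varepsilon^3)\log(n/\varepsilon)\bigr)$.
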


\section{Learning Using Sequential Fat-Shattering Dimension} \label{sec:dimension}

In this section, we prove regret bounds using the notion of \emph{sequential
fat-shattering dimension\/}.
Let $S$ be a set of functions $f : U \rightarrow [0,1]$,
and~$\varepsilon > 0$. \ Then, following
\cite{rst}, let the
$\varepsilon$-\textit{sequential fat-shattering dimension} of $S$, or
$\operatorname*{sfat}_{\varepsilon}(S)$, be the largest $k$ for which
we can construct a complete binary tree $T$ of depth $k$, such that

\begin{itemize}
\item each internal vertex $v \in T$ has associated with it a point
$x_v \in U$ and a real $a_v \in [0,1]$, and

\item for each leaf vertex $v \in T$ there exists an $f \in S$ that
causes us to reach $v$ if we traverse $T$ from the root such that at 
any internal node~$w$ we traverse the left subtree if $f(x_w) \le a_w 
- \varepsilon$ and the right subtree if $f(x_w) \ge a_w + \varepsilon$.
If we view the leaf~$v$ as a~$k$-bit string, the function~$f$ is such 
that for all ancestors~$u$ of~$v$, we have~$f(x_u) \le a_u - \varepsilon$
if~$v_i = 0$, and~$f(x_u) \ge a_u + \varepsilon$ if~$v_i = 1$, when~$u$
is at depth~$i-1$ from the root.
\end{itemize}
An~$n$-qubit state~$\rho$ induces a function~$f$ on the set of two-outcome
measurements~$E$ defined as~$f(E) \coloneqq \trace(E\rho)$. With this
correspondence in mind, we establish a bound on the sequential 
fat-shattering dimension of the set of~$n$-qubit quantum states.
The bound is based on a generalization of \textquotedblleft random access
coding\textquotedblright\ (\cite{Nayak99,antv}) called ``serial
encoding''. We derive the
following bound on the length of serial encoding. Let~$\mathrm{H}(x)
\coloneqq -x \log_2 x - (1-x) \log_2 (1-x)$ be the binary entropy function.
\begin{corollary}
\label{cor-se}
Let~$k$ and~$n$ be positive integers. For each~$k$-bit
string~$y \coloneqq y_{1}\dotsb y_{k}$, let~$\rho_{y}$ be an~$n$-qubit
mixed state such that for each~$i \in \{1, 2, \dotsc, k \}$, there is a
two-outcome measurement~$E'$ that depends only on~$i$ and the 
prefix~$v \coloneqq y_1 y_2 \dotsb y_{i-1}$, and has the following properties
\begin{enumerate}

\item[(iii)] if~$y_{i} = 0$ then~${\mathrm{Tr}}(E^{\prime}\rho_{y}) \le a_{v}
- \varepsilon$, and

\item[(iv)] if~$y_{i} = 1$ then~${\mathrm{Tr}}(E^{\prime}\rho_{y}) \ge a_{v} +
\varepsilon$,
\end{enumerate}
where~$\varepsilon \in (0,1/2]$ and~$a_{v}\in [0, 1 ]$ is a \textquotedblleft pivot
point\textquotedblright\ associated with the prefix~$v$. Then
\[
n \quad \geq \quad \left( 1- \mathrm{H} \left(
    \frac{1-\varepsilon}{2} \right) \right) k \enspace.
\]
In particular, $k = \operatorname{O} \!\left(n / \varepsilon^2 \right)$.
\end{corollary}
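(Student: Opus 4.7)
}
The plan is to establish the bound by a Holevo-style information-theoretic argument, generalizing Nayak's random access code lower bound to the prefix-adaptive (``serial'') setting in which the measurement decoding bit~$i$ may depend on the previously decoded prefix $y_{1}\dotsb y_{i-1}$. The idea is to compare an upper bound on the mutual information between the encoded string and the quantum register (given by Holevo) against a lower bound obtained by summing, via the chain rule, the information that each hypothesized measurement extracts about the corresponding bit.

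I would draw $Y \coloneqq Y_{1}\dotsb Y_{k}$ uniformly on $\{0,1\}^{k}$ and form the classical-quantum state
\[
\rho^{YQ} \;\coloneqq\; 2^{-k} \sum_{y\in\{0,1\}^{k}} |y\rangle\!\langle y| \otimes \rho_{y},
\]
where register $Q$ holds the $n$-qubit encoding. Since $Q$ has dimension $2^{n}$, Holevo's theorem gives the upper bound $I(Y;Q)_{\rho^{YQ}} \leq n$. The rest of the argument is to match this against the lower bound $I(Y;Q) \geq k\bigl(1 - \mathrm{H}((1-\varepsilon)/2)\bigr)$.

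The chain rule for quantum mutual information with a classical conditioning register yields
\[
I(Y;Q) \;=\; \sum_{i=1}^{k} I(Y_{i}; Q \mid Y_{<i}) \enspace,
\]
so it suffices to lower-bound each summand by $1 - \mathrm{H}((1-\varepsilon)/2)$. Fix $i$ and a prefix value $v$, and let $\sigma_{v,b} \coloneqq \E[\rho_{Y} \mid Y_{<i}=v,\,Y_{i}=b]$ for $b \in \{0,1\}$. Linearity of the trace, together with the hypothesis on the measurement $E^{\prime}$ associated with $(i,v)$, gives $\trace(E^{\prime}\sigma_{v,0}) \leq a_{v}-\varepsilon$ and $\trace(E^{\prime}\sigma_{v,1}) \geq a_{v}+\varepsilon$. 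Measuring $E^{\prime}$ on a uniform mixture of $\sigma_{v,0}$ and $\sigma_{v,1}$ and thresholding the Bernoulli outcome $Z$ at $a_{v}$ yields a guess for $Y_{i}$ with error probability at most $\tfrac{1}{2}-\varepsilon$. Fano's inequality then gives $I(Y_{i};Z \mid Y_{<i}=v) \geq 1 - \mathrm{H}(\tfrac{1}{2}-\varepsilon) \geq 1 - \mathrm{H}((1-\varepsilon)/2)$, using monotonicity of $\mathrm{H}$ on $[0,1/2]$. The conditional form of Holevo's theorem lifts this to $I(Y_{i};Q \mid Y_{<i}=v) \geq 1 - \mathrm{H}((1-\varepsilon)/2)$. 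Averaging over $v$ and summing over $i$, and combining with the Holevo upper bound, yields $n \geq k\bigl(1 - \mathrm{H}((1-\varepsilon)/2)\bigr)$; a Taylor expansion of $\mathrm{H}$ at $1/2$ gives $1 - \mathrm{H}((1-\varepsilon)/2) = \Theta(\varepsilon^{2})$ on $(0,1/2]$, from which the ``in particular'' bound $k = \operatorname{O}(n/\varepsilon^{2})$ follows.

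The main obstacle I anticipate is applying the chain rule and the conditional Holevo inequality rigorously to the classical-quantum state $\rho^{YQ}$: both are standard once the $Y_{i}$'s are encoded as orthogonal classical registers, but care is needed to ensure that $E^{\prime}$ is a function of $(i,v)$ alone, so that the $2\varepsilon$ distinguishability gap survives the average over the unrevealed suffix $Y_{>i}$. This survival is immediate from linearity of the trace; everything else---Fano's inequality applied to a binary channel with bias $2\varepsilon$, and the entropy Taylor expansion---is routine.
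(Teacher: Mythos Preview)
Your proposal is correct, but it takes a different route from the paper's proof of Corollary~\ref{cor-se}. The paper does not run a Holevo/chain-rule/Fano argument here; instead it reduces the pivot-point hypothesis to the standard serial-encoding hypothesis of Nayak's theorem (Theorem~\ref{thm-se}) by replacing each $E'$ with a rescaled measurement $E$ (namely $E'/(2a_v)$ when $a_v \ge 1/2$ and $(E' + (1-2a_v)\mathbbm{I})/(2(1-a_v))$ otherwise) that translates the pivot $a_v$ to $1/2$, and then invokes Theorem~\ref{thm-se} as a black box with $p = (1-\varepsilon)/2$. Your argument is essentially the proof one would give \emph{of} Theorem~\ref{thm-se} itself, inlined and adapted to arbitrary pivots---indeed it is the same skeleton the paper uses later for the more general Theorem~\ref{thm-arac}. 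The trade-offs: the paper's proof is two lines given the black box, while yours is self-contained; moreover, because you average the two conditional error probabilities $\tfrac{1}{2}(a_v-\varepsilon)+\tfrac{1}{2}(1-a_v-\varepsilon)=\tfrac{1}{2}-\varepsilon$ rather than rescaling, you in fact obtain the tighter bound $n \ge (1-\mathrm{H}(1/2-\varepsilon))k$ before weakening to the stated one. One cosmetic imprecision: ``thresholding the Bernoulli outcome $Z$ at $a_v$'' is not quite right, since $Z\in\{0,1\}$; what you actually do is set $\hat{Y}_i = Z$ and observe that the \emph{average} error over the uniform prior on $Y_i$ is at most $1/2-\varepsilon$, which is all Fano requires.
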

(The proof is presented in Appendix E).

%
Corollary~\ref{cor-se} immediately implies the following theorem:
\begin{theorem}
\label{sfatthm} Let $U$ be the set of two-outcome measurements $E$ on
an $n$-qubit state, and let $S$ be the set of all functions $f : U
\rightarrow [0,1]$ that have the form $f(E) \coloneqq \trace(E \rho)$ for some
$\rho$. \ Then for all $\varepsilon > 0$, we have
$\operatorname*{sfat}_{\varepsilon}(S) = \operatorname{O}
\! \left(n / \varepsilon^2 \right)$. 
\end{theorem}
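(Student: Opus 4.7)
The plan is to observe that Corollary~\ref{cor-se} has already packaged the combinatorial structure of an $\varepsilon$-sequential fat-shattering tree into quantum-information language, so the proof reduces to a direct translation between the two formalisms. Specifically, I would start from a witnessing shattering tree of depth $k = \operatorname*{sfat}_\varepsilon(S)$ and argue that the data attached to it is exactly the input to Corollary~\ref{cor-se}.

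In detail, suppose $\operatorname*{sfat}_\varepsilon(S) \geq k$, and let $T$ be the corresponding complete binary tree of depth $k$. Each internal vertex of $T$ is labelled by a measurement $E_v \in U$ and a pivot $a_v \in [0,1]$. For each leaf, identified with a bit string $y = y_1 \cdots y_k$, the shattering condition gives a function $f \in S$ with $f(E) = \trace(E\rho_y)$ for some $n$-qubit state $\rho_y$, such that at every ancestor $u$ of the leaf $y$ at depth $i-1$ one has $\trace(E_u \rho_y) \leq a_u - \varepsilon$ when $y_i = 0$ and $\trace(E_u \rho_y) \geq a_u + \varepsilon$ when $y_i = 1$. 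Since $u$ is uniquely determined by the prefix $v = y_1 \cdots y_{i-1}$, both the measurement $E_u$ and the pivot $a_u$ depend only on $i$ and on that prefix. This is exactly the hypothesis of Corollary~\ref{cor-se} with the relabelling $E' \coloneqq E_u$ and $a_v \coloneqq a_u$.

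Applying Corollary~\ref{cor-se} to the collection $\{\rho_y : y \in \{0,1\}^k\}$ then yields $n \geq \bigl(1 - \mathrm{H}\bigl(\tfrac{1-\varepsilon}{2}\bigr)\bigr) k$, and hence $k = \operatorname{O}(n/\varepsilon^2)$, since $1 - \mathrm{H}(\tfrac{1-\varepsilon}{2}) = \Omega(\varepsilon^2)$ by the standard Taylor expansion of the binary entropy around $1/2$. Taking the supremum over all valid $k$ gives $\operatorname*{sfat}_\varepsilon(S) = \operatorname{O}(n/\varepsilon^2)$, as required.

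The main obstacle in proving the theorem from scratch would be Corollary~\ref{cor-se} itself (the serial-encoding lower bound), which is handled separately in Appendix~E; once that is in hand, the only remaining task is the syntactic matching of a sequential fat-shattering tree to a serial-encoding scheme, which is immediate because the sequential model, by design, allows the ``question'' at depth $i$ to depend on all previous answers --- this is exactly the dependence on the prefix $y_1 \cdots y_{i-1}$ that Corollary~\ref{cor-se} accommodates.
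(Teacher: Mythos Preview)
Your proposal is correct and matches the paper's approach exactly: the paper simply states that Corollary~\ref{cor-se} ``immediately implies'' Theorem~\ref{sfatthm}, and what you have written is a careful unpacking of that immediate implication---identifying leaves of the shattering tree with $k$-bit strings, internal vertices with prefix-dependent measurements and pivots, and the witnessing functions with the encoding states~$\rho_y$. There is nothing to add; the syntactic matching you describe is precisely the intended argument.
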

Theorem \ref{sfatthm} strengthens an earlier result due to
\cite{aar:learn}, which proved the same upper bound for the
``ordinary'' (non-sequential) fat-shattering dimension of quantum
states considered as a hypothesis class.

Now we may use existing results from the literature, which relate
sequential fat-shattering dimension to online learnability. \ In
particular, in the non-realizable case, \cite{rst} recently showed the
following:
\begin{theorem}
[\cite{rst}] \label{rstthm}
Let $S$ be a set of functions $f : U \rightarrow [0,1]$ and for every
integer~$t \ge 1$, let~$\ell_t : [0,1] \rightarrow \reals$ be a convex,
$L$-Lipschitz loss function. \ Suppose we
are sequentially presented elements $x_1,x_2,\ldots \in U$, with each
$x_t$ followed by the loss function~$\ell_t$. \ Then there
exists a learning strategy that lets us output a sequence of
hypotheses $f_1,f_2,\ldots \in S$, such that the regret is
upper-bounded as:
$$ \sum_{t=1}^{T} \ell_t\left( f_t(x_t) \right) \le
\min_{f \in S} \sum_{t=1}^{T} \ell_t \left( f(x_t) \right) + 2 L T ~
\inf_{\alpha} \left\{ 4 \alpha + \frac{12}{\sqrt{T}} \int_{\alpha}^{1}
\sqrt{ \operatorname{sfat}_{\beta}(S) \log\left( \frac{2\mathrm{e}T}{\beta}
\right)} \mathrm{d}\beta \right\}. $$
\end{theorem}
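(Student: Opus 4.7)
The plan is to follow the three-part recipe developed by Rakhlin-Sridharan-Tewari: convert the regret bound to a bound on sequential Rademacher complexity, bound that in turn by a Dudley-type entropy integral over sequential covering numbers, and bound those covering numbers by the sequential fat-shattering dimension.

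First, I would invoke the sequential symmetrization/minimax theorem to bound the minimax value of the online regret game by (twice) the sequential Rademacher complexity
$$\mathfrak{R}_T^{\mathrm{seq}}(\ell \circ S) \;=\; \sup_{\mathbf{x}} \mathbb{E}_{\epsilon}\, \sup_{f \in S} \sum_{t=1}^T \epsilon_t\, \ell_t\!\big(f(\mathbf{x}_t(\epsilon_{1:t-1}))\big),$$
where the outer supremum ranges over all complete binary trees $\mathbf{x}$ of depth $T$ labeled by elements of $U$ and $\epsilon_1,\ldots,\epsilon_T$ are i.i.d.\ Rademacher signs selecting a root-to-leaf path. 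A tree analogue of the Ledoux-Talagrand contraction lemma, proved by induction on depth using convexity and Lipschitzness at each internal node, then strips off the losses at a factor $L$, giving $\mathfrak{R}_T^{\mathrm{seq}}(\ell\circ S) \le L \cdot \mathfrak{R}_T^{\mathrm{seq}}(S)$.

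Second, I would apply Dudley-style chaining \emph{inside} trees: at geometric scales $\beta_j = 2^{-j}$, approximate every $f \in S$ by its nearest center in a minimal sequential $\ell_2$-cover at scale $\beta_j$, telescope the differences along the chain, and apply a union bound over the (finite) cover at each scale to obtain
$$\mathfrak{R}_T^{\mathrm{seq}}(S) \;\le\; T \cdot \inf_{\alpha}\!\left\{4\alpha + \frac{12}{\sqrt{T}} \int_{\alpha}^{1} \sqrt{\log \mathcal{N}^{\mathrm{seq}}_2(\beta, S, T)}\, \mathrm{d}\beta\right\}.$$

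The combinatorial heart of the proof, and where I expect the main obstacle, is the sequential analogue of the Alon-Ben-David-Cesa-Bianchi-Haussler covering-number bound,
$$\log \mathcal{N}^{\mathrm{seq}}_2(\beta, S, T) \;\le\; \operatorname{sfat}_{\beta}(S) \cdot \log \frac{2\mathrm{e}T}{\beta}.$$
Classically, one shows that a large finite subclass forces a long shattered sequence via a pigeonhole/probabilistic argument on sign patterns. Lifting this to trees means building a \emph{shattered tree} rather than a shattered sequence: one probabilistically refines a discretization along every root-to-leaf path so the same tree witnesses shattering for all paths simultaneously. Making this refinement globally consistent across the tree, rather than along a single fixed path, is the technical crux. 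Finally, to upgrade from the mere existence of a strategy (as supplied by the minimax theorem) to an explicit learning algorithm, I would invoke the RST \emph{relaxation} framework: define a relaxation equal to the chaining bound above, verify a one-step admissibility recursion mirroring the chaining step, and at each round output the $f_t$ that minimizes instantaneous loss plus the relaxation applied to the remaining horizon. Combining the four pieces yields the claimed regret bound.
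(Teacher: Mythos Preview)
The paper does not prove this theorem at all: it is quoted as a result of Rakhlin, Sridharan, and Tewari, with the single remark ``This follows from Theorem~8 in~(\cite{rst}) as in the proof of Proposition~9 in the same article.'' So there is no paper-side proof to compare your proposal against; the authors treat the statement as a black box imported from the cited work.

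Your sketch is a faithful outline of the RST machinery that underlies the cited result: minimax/symmetrization reducing regret to sequential Rademacher complexity, tree contraction to peel off the Lipschitz losses, a Dudley-type chaining bound in terms of sequential covering numbers, and the combinatorial bound on sequential covers via $\operatorname{sfat}_\beta$. That is indeed how Theorem~8 and Proposition~9 in~\cite{rst} go, so in that sense your plan matches the source the paper defers to. One small overreach: your final paragraph on the relaxation framework to extract an \emph{explicit} algorithm is unnecessary here and in fact goes beyond what the theorem asserts. The statement only claims existence of a strategy, and the paper itself explicitly notes (just after Corollary~\ref{sfatcor}) that the RST result is non-constructive. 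The minimax argument alone already delivers existence; you need not invoke relaxations.
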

This follows from Theorem~8 in~(\cite{rst}) as in the proof of
Proposition~9 in the same article.

Combining Theorem \ref{sfatthm} with Theorem \ref{rstthm} gives us
the following:
\begin{corollary}
\label{sfatcor} Suppose we are presented with a sequence of two-outcome
measurements $E_1,E_2,\ldots$ of an $n$-qubit state, with each $E_t$
followed by a loss function~$\ell_t$ as in Theorem~\ref{rstthm}.
\ Then there exists a
learning strategy that lets us output a sequence of hypothesis states
$\omega_1,\omega_2,\ldots$ such that the regret after the first
$T$ iterations is upper-bounded as:
$$ \sum_{t=1}^{T} \ell_t \left( \trace(E_t \omega_t) \right) \le
\min_{\omega \in C_n} \sum_{t=1}^{T} \ell_t \left( \trace(E_t \omega) \right) +
\operatorname{O}\! \left( L \sqrt{n T} \log^{3/2} T \right). $$
\end{corollary}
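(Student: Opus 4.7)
The plan is to apply Theorem~\ref{rstthm} in a black-box manner to the hypothesis class $S$ of functions $f : U \to [0,1]$ defined by $f(E) \coloneqq \trace(E\rho)$ for $n$-qubit mixed states $\rho$, exactly as introduced in Theorem~\ref{sfatthm}. Under the bijection $\rho \leftrightarrow f_\rho$, any sequence of hypothesis functions $f_1, f_2, \ldots \in S$ produced by the strategy of Theorem~\ref{rstthm} corresponds to a sequence of hypothesis states $\omega_1, \omega_2, \ldots \in C_n$ with the same per-iteration loss, so $\sum_t \ell_t(f_t(E_t))$ equals $\sum_t \ell_t(\trace(E_t \omega_t))$ and the comparator $\min_{f \in S}$ matches $\min_{\omega \in C_n}$. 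Hence it suffices to substitute the bound $\operatorname{sfat}_\beta(S) = O(n/\beta^2)$ from Theorem~\ref{sfatthm} into the regret inequality of Theorem~\ref{rstthm} and carry out the resulting calculation.

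The core step is to evaluate and then optimize
\[
F(\alpha) \;\coloneqq\; 4\alpha + \frac{12}{\sqrt{T}} \int_{\alpha}^{1} \sqrt{\operatorname{sfat}_\beta(S) \, \log\!\left(\frac{2\mathrm{e}T}{\beta}\right)} \, \mathrm{d}\beta \enspace.
\]
Plugging in $\operatorname{sfat}_\beta(S) = O(n/\beta^2)$, the integrand becomes $O(\sqrt{n}) \cdot \beta^{-1} \sqrt{\log(2\mathrm{e}T/\beta)}$. The substitution $u = \log(2\mathrm{e}T/\beta)$, $\mathrm{d}u = -\mathrm{d}\beta/\beta$, converts the integral into $O(\sqrt{n}) \int_{\log(2\mathrm{e}T)}^{\log(2\mathrm{e}T/\alpha)} \sqrt{u}\, \mathrm{d}u = O\!\left(\sqrt{n}\, \log^{3/2}(2\mathrm{e}T/\alpha)\right)$. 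Thus $2LT\cdot F(\alpha) = O(LT\alpha) + O\!\left(L\sqrt{nT}\,\log^{3/2}(2\mathrm{e}T/\alpha)\right)$, and choosing $\alpha = 1/T$ balances the two terms, yielding the final bound $O(L\sqrt{nT}\,\log^{3/2} T)$ as claimed.

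The only step requiring real care is the integral computation and the choice of $\alpha$: one must check that $\log(2\mathrm{e}T/\beta) > 0$ throughout $\beta \in [\alpha, 1]$ (automatic for $T \ge 1$), and that the $\log^{3/2}(2\mathrm{e}T)$ contribution from the lower limit of the $u$-integration is absorbed into the upper-limit term $\log^{3/2}(2\mathrm{e}T/\alpha)$ under the $O(\cdot)$ notation. Beyond this elementary analysis, the argument is a pure composition of Theorem~\ref{sfatthm} and Theorem~\ref{rstthm}; all of the quantum-information content lives in the fat-shattering bound of Theorem~\ref{sfatthm}, which in turn comes from Corollary~\ref{cor-se}.
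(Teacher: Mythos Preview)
Your proposal is correct and follows exactly the approach the paper indicates: the paper simply states that Corollary~\ref{sfatcor} is obtained by ``combining Theorem~\ref{sfatthm} with Theorem~\ref{rstthm}'' and gives no further detail, so you have in fact supplied the omitted computation. One minor remark: the choice $\alpha = 1/T$ does not so much ``balance'' the two terms as render the $4\alpha$ term $\operatorname{O}(L)$, which is then absorbed into the dominant $\operatorname{O}(L\sqrt{nT}\log^{3/2}T)$ term.
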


Note that the result due to \cite{rst} is non-explicit. \ In other words,
by following this approach, we do not derive any specific online
learning algorithm for quantum states that has the stated upper bound 
on regret; we only prove non-constructively that such an algorithm exists. 

We expect that the approach in this section, based on sequential
fat-shattering dimension, could also be used to prove a mistake bound
for the realizable case, but we leave that to future work.

\section{Open Problems}

We conclude with some questions arising from this work. The regret bound
established in Theorem \ref{thm2} for~$L_1$ loss is tight.
Can we similarly achieve optimal regret for other loss
functions of interest, for example for~$L_2$-loss? It would also be interesting to obtain regret bounds in terms of the loss of the best quantum state in hindsight, as opposed to $T$
(the number of iterations), using the techniques in this article. Such a
bound has been shown by~\cite[Lemma~3.2]{tsuda} for~$L_2$-loss using the
Matrix Exponentiated Gradient method.

In what cases can one do online learning of quantum states, not only with few samples, but also with a polynomial amount of computation?
What is the tight generalization of our results to measurements with~$d$ outcomes?
Is it the case, in online learning of quantum states, that {\em any} algorithm works, so long as it produces hypothesis states that are approximately consistent with all the data seen so far? \ Note that none of our three proof techniques seem to imply this general conclusion.
\bibliographystyle{plainnat}
\bibliography{thesis}
\newpage
\appendix  \label{sec:appendix1}
\section{Auxiliary Lemmas}

The following lemma is from  (\cite{tsuda}), given here for completeness.
\begin{lemma}\label{auxlemma}
For Hermitian matrices $A, B$ and Hermitian PSD matrix $X$, if $A\succeq B$, then $\trace(AX) \ge \trace(BX)$.
\end{lemma}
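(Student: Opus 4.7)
\textbf{Proof plan for Lemma~\ref{auxlemma}.} The plan is to reduce the inequality $\trace(AX) \ge \trace(BX)$ to the single statement that the trace of a product of two positive semi-definite matrices is non-negative, applied to the PSD matrix $A-B$ and the PSD matrix $X$. By linearity of the trace, $\trace(AX) - \trace(BX) = \trace((A-B)X)$, so it suffices to show $\trace(MX) \ge 0$ whenever $M$ and $X$ are Hermitian PSD.

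To establish this auxiliary fact I would use the square-root trick. Since $X$ is Hermitian PSD, it admits a Hermitian PSD square root $X^{1/2}$ satisfying $X = X^{1/2} X^{1/2}$. Using the cyclic property of the trace,
\begin{equation*}
\trace(MX) \;=\; \trace\!\bigl(M X^{1/2} X^{1/2}\bigr) \;=\; \trace\!\bigl(X^{1/2} M X^{1/2}\bigr).
\end{equation*}
The matrix $X^{1/2} M X^{1/2}$ is Hermitian, and for any vector $v$ we have $v^\dagger (X^{1/2} M X^{1/2}) v = (X^{1/2} v)^\dagger M (X^{1/2} v) \ge 0$ because $M \succeq 0$. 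Hence $X^{1/2} M X^{1/2} \succeq 0$, and the trace of a PSD matrix is the sum of its (non-negative) eigenvalues, so $\trace(X^{1/2} M X^{1/2}) \ge 0$.

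Combining the two steps with $M = A - B$ gives $\trace(AX) - \trace(BX) = \trace((A-B)X) \ge 0$, as required. There is no real obstacle here: the only subtlety worth flagging is the existence of a Hermitian PSD square root of $X$, which follows from the spectral theorem applied to the Hermitian matrix $X$, together with the fact that all eigenvalues of a PSD matrix are non-negative so their square roots are real.
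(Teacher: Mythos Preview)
Your proof is correct and follows the same overall reduction as the paper: both set $C \coloneqq A - B \succeq 0$ and reduce to showing $\trace(CX) \ge 0$ for PSD matrices $C$ and $X$. The only difference is in how that auxiliary fact is established. The paper diagonalizes $X = VQV^\dagger$, writes $C = VPV^\dagger$ with $P = V^\dagger C V \succeq 0$, and computes $\trace(CX) = \sum_i P_{ii} Q_{ii} \ge 0$ using that diagonal entries of a PSD matrix are nonnegative. You instead use the square-root trick, writing $\trace(CX) = \trace(X^{1/2} C X^{1/2})$ and observing that this is the trace of a PSD matrix. Both arguments are standard and equally elementary; your version has the minor advantage of not needing to track a change of basis, while the paper's version avoids invoking the existence of a PSD square root.
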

\begin{proof}
Let $C \coloneqq A-B$. By definition, $C\succeq 0$. It suffices to show that $\trace(CX) \ge 0$. Let $VQV^\dagger$ be the eigen-decomposition of $X$, and let $C = VPV^\dagger$, where $P \coloneqq  V^\dagger C V \succeq 0$. Then $\trace(CX) = \trace(VPQV^\dagger) = \trace(PQ) = \sum_{i=1}^n P_{ii}Q_{ii}.$ Since $P\succeq 0$ and all the eigenvalues of $X$ are nonnegative, $P_{ii}\ge 0$, $Q_{ii}\ge 0$. Therefore $\trace(CX)\ge 0$.
\end{proof}
\begin{lemma}
If $A, B$ are Hermitian matrices, then $\trace(AB)\in\reals$.
\end{lemma}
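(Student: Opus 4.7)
The plan is to show that $\trace(AB)$ equals its own complex conjugate, which forces it to lie in $\reals$. The key observation is that complex conjugation of the trace corresponds to taking the adjoint inside the trace, i.e., $\overline{\trace(M)} = \trace(M^\dagger)$ for any square complex matrix $M$; this follows directly from the definition of the trace as a sum of diagonal entries together with the fact that $(M^\dagger)_{ii} = \overline{M_{ii}}$.

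With this in hand, I would compute
\[
\overline{\trace(AB)} \;=\; \trace\!\left( (AB)^\dagger \right) \;=\; \trace(B^\dagger A^\dagger) \;=\; \trace(BA),
\]
using the identity $(AB)^\dagger = B^\dagger A^\dagger$ and the Hermiticity assumptions $A^\dagger = A$, $B^\dagger = B$. Then by the cyclic property of the trace, $\trace(BA) = \trace(AB)$, so $\overline{\trace(AB)} = \trace(AB)$, which means $\trace(AB) \in \reals$.

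There is no real obstacle here; the whole argument is a one-line chain of standard identities. The only thing worth flagging is that one should explicitly invoke the cyclic property of the trace (valid for arbitrary square matrices of compatible dimensions, regardless of Hermiticity) and the conjugate-transpose identity $\overline{\trace(M)} = \trace(M^\dagger)$, both of which are elementary and require no appeal to the spectral theorem or to positivity.
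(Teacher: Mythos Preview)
Your proof is correct and in fact more elementary than the paper's. The paper argues via the spectral theorem: it diagonalizes~$A = VQV^\dagger$, writes~$B = VPV^\dagger$ with~$P \coloneqq V^\dagger B V$, observes that~$P$ is Hermitian and hence has real diagonal, and then reduces~$\trace(AB)$ to the real sum~$\sum_i Q_{ii} P_{ii}$. Your route avoids diagonalization entirely, relying only on the identities~$\overline{\trace(M)} = \trace(M^\dagger)$, $(AB)^\dagger = B^\dagger A^\dagger$, and the cyclicity of the trace. This buys you a proof that works verbatim for bounded Hermitian operators on infinite-dimensional spaces (where one of the factors is trace class), whereas the paper's argument is tied to finite dimensions through the eigendecomposition. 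The paper's approach, on the other hand, mirrors the structure of the preceding Lemma~\ref{auxlemma} and gives an explicit real expression for~$\trace(AB)$ rather than merely certifying its reality.
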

\begin{proof}
The proof is similar to Lemma \ref{auxlemma}. Let $VQV^\dagger$ be the eigendecomposition of $A$. Then $Q$ is a real diagonal matrix. We
have $B = VPV^\dagger$, where $P \coloneqq V^\dagger B V$. Note that $P^\dagger = V^\dagger B^\dagger V = P$, so $P$ has a real diagonal. Then $\trace(AB) = \trace(VQV^\dagger VPV^\dagger) = \trace(VQPV^\dagger) = \trace(QP) = \sum_{i=1}^n Q_{ii}P_{ii}$. Since $Q_{ii}, P_{ii}\in\reals$ for all $i$, $\trace(AB) \in\reals$.
\end{proof}

\section{Proof of Theorem \ref{thm:main}}
\label{sec:rftl-main-proof}

\begin{proof}[Proof of Theorem \ref{thm:main}]
Since $\ell_t$ is convex, for all $ \varphi \in\K$,
\begin{align*}
\ell_t(\trace(E_t \omega_t)) - \ell_t(\trace(E_t \varphi))  \le \ell_t'(\trace(E_t
\omega_t )) \left[ \trace(E_t \omega_t ) - \trace(E_t \varphi) \right]
= \nabla_t \bullet \left( \omega_t - \varphi \right) \enspace.
\end{align*} 
(Recall that~`$\bullet$' denotes the trace inner-product between 
complex matrices of the same dimensions.)
Summing over $t$,
\begin{align*}
\sum_{t=1}^T [\ell_t(\trace(E_t \omega_t)) - \ell_t(\trace(E_t \varphi))] &\le
\sum_{t=1}^T \left[ \trace(\nabla_t  \omega_t) - \trace(\nabla_t \varphi)
\right]
\enspace.
\end{align*}

Define $g_t(X) = \nabla_t\bullet X$, and $g_0(X) = \frac{1}{\eta} R(X)$, where $R(X)$ is the negative von Neumann Entropy
of~$X$ (in nats). Denote $D_R^2 \coloneqq \max_{\varphi, \varphi'\in \K} \{R(\varphi) - R(\varphi')\}$. By
\cite[Lemma 5.2]{oco}, for any $\varphi\in\K$, we have
\begin{equation}\label{lem5.2}
\sum_{t=1}^T [g_t( \omega_t) - g_t(\varphi)] \le \sum_{t=1}^T \nabla_t\bullet(
\omega_t -  \omega_{t+1}) + \frac{1}{\eta} D_R^2
\enspace.
\end{equation}

Define $\Phi_t(X) = \{ \eta \sum_{s=1}^t \nabla_s \bullet X + R(X)\}$,
then the convex program in line 5 of Algorithm \ref{alg:rftl} finds the
minimizer of $\Phi_t(X)$ in $\K$. The following claim shows that that
the minimizer is always \emph{positive definite\/} (proof provided later in this section):

\begin{claim}\label{claim2}
For all $t\in \{1, 2, ..., T\}$, we have $ \omega_t\succ 0$.
\end{claim}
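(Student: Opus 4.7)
The plan is to prove Claim 2 by induction on $t$, using the classical fact that the negative von Neumann entropy acts as a barrier for the positive semidefinite cone, so its gradient blows up at the boundary of $C_n$ and pushes the minimizer into the interior.

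The base case $t=1$ is immediate since $\omega_1 \coloneqq 2^{-n}\mathbbm{I} \succ 0$ by construction. For $t \ge 2$, I would fix the notation $\Phi_{t-1}(X) \coloneqq \eta \sum_{s=1}^{t-1} \nabla_s \bullet X + R(X)$, where $R(X) = \sum_i \lambda_i(X) \log \lambda_i(X)$, so that $\omega_t = \argmin_{\varphi \in C_n} \Phi_{t-1}(\varphi)$. The existence of a minimizer is clear because $\Phi_{t-1}$ is continuous on the compact convex set $C_n$ (using the standard convention $0\log 0 = 0$). My goal is then to argue that any such minimizer has no zero eigenvalue.

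Suppose, for contradiction, that $\omega_t$ has at least one zero eigenvalue. I would construct the perturbation $\omega_\epsilon \coloneqq (1-\epsilon)\omega_t + \epsilon \cdot 2^{-n} \mathbbm{I} \in C_n$ for $\epsilon \in (0,1)$ and compute $\Phi_{t-1}(\omega_\epsilon) - \Phi_{t-1}(\omega_t)$. The linear part contributes $\epsilon \cdot \eta \sum_{s=1}^{t-1} \nabla_s \bullet (2^{-n}\mathbbm{I} - \omega_t) = O(\epsilon)$, using that each $\nabla_s = \ell_s'(\trace(E_s\omega_s))E_s$ has bounded operator norm. For the entropy part, because $2^{-n}\mathbbm{I}$ commutes with $\omega_t$, the matrix $\omega_\epsilon$ is diagonal in the same eigenbasis as $\omega_t$ with eigenvalues $(1-\epsilon)\lambda_i + \epsilon\, 2^{-n}$, so that
\[
R(\omega_\epsilon) - R(\omega_t) \;=\; \sum_i \bigl[(1-\epsilon)\lambda_i + \epsilon\, 2^{-n}\bigr]\log\bigl[(1-\epsilon)\lambda_i + \epsilon\, 2^{-n}\bigr] - \sum_i \lambda_i \log \lambda_i .
\]
Each index $i$ with $\lambda_i = 0$ contributes a term of the form $\epsilon\, 2^{-n}\log(\epsilon\,2^{-n}) = \Theta(\epsilon \log \epsilon)$, which is negative and dominates the $O(\epsilon)$ contribution from the linear part and from the indices with $\lambda_i > 0$ (whose change is smooth and $O(\epsilon)$). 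Hence for all sufficiently small $\epsilon > 0$, $\Phi_{t-1}(\omega_\epsilon) < \Phi_{t-1}(\omega_t)$, contradicting the optimality of $\omega_t$.

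The only potentially delicate step is controlling the $O(\epsilon)$ smooth part of $R(\omega_\epsilon) - R(\omega_t)$ coming from the strictly positive eigenvalues, so that it cannot cancel the dominant negative $\Theta(\epsilon\log(1/\epsilon))$ contribution from the zero eigenvalues. I would handle this by splitting the sum over $i$ into the two cases $\lambda_i = 0$ and $\lambda_i > 0$: for the latter, $x \mapsto x \log x$ is smooth at $\lambda_i$, so a first-order Taylor expansion gives an $O(\epsilon)$ change, while for the former the exact expression $\epsilon\,2^{-n}\log(\epsilon\,2^{-n})$ is directly seen to beat any linear-in-$\epsilon$ quantity for small $\epsilon$. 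This yields the contradiction and completes the inductive step, establishing $\omega_t \succ 0$ for every $t$.
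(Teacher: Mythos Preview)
Your argument is correct and follows essentially the same line as the paper's: both show that a boundary point of $C_n$ cannot minimize $\Phi_{t-1}$ because the negative entropy contributes a term of order $\epsilon\log\epsilon$ under a small perturbation toward the interior, which dominates the $O(\epsilon)$ change from the linear part. The only differences are cosmetic: the paper perturbs by transferring $\epsilon$ of mass from the largest eigenvalue to a single zero eigenvalue (and then remarks that one repeats this if several eigenvalues vanish), whereas you mix with $2^{-n}\mathbbm{I}$, which handles all zero eigenvalues at once; and your induction on $t$ is harmless but superfluous, since your argument for $t\ge 2$ never invokes the inductive hypothesis.
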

For $X\succ 0$, we can write $R(X) = \trace(X\log X)$, and define
\[
\nabla \Phi_t(X) \coloneqq \eta \sum_{s = 1}^t \nabla_s + \mathbbm{I} + \log X
\enspace.
\]
The definition of $\nabla \Phi_t(X)$ is analogous to the gradient of $\Phi_t(X)$ if the function is defined over real symmetric matrices. Moreover, the following condition, similar to the optimality condition over a real domain, is satisfied (proof provided later in this section).
\begin{claim} \label{claim_opt}
For all~$t \in \{1, 2, \dotsc, T-1\}$,
\begin{align}
\nabla \Phi_t( \omega_{t+1})\bullet ( \omega_t- \omega_{t+1}) \ge 0 \enspace.
\end{align}
\end{claim}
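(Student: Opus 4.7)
The plan is to recognize Claim~\ref{claim_opt} as a matrix version of the standard first-order optimality condition for convex minimization over a convex set. Since $\omega_{t+1}$ is by construction the minimizer of the convex function $\Phi_t$ over the convex set $\K = C_n$ and $\omega_t \in \K$, the entire segment $\omega_\alpha \coloneqq \omega_{t+1} + \alpha(\omega_t - \omega_{t+1})$ with $\alpha \in [0,1]$ is feasible, and the scalar function $h(\alpha) \coloneqq \Phi_t(\omega_\alpha)$ attains its minimum on $[0,1]$ at $\alpha = 0$. Consequently $h'(0^+) \ge 0$ whenever this one-sided derivative exists, so the claim reduces to the identity
$$h'(0^+) \;=\; \nabla \Phi_t(\omega_{t+1}) \bullet (\omega_t - \omega_{t+1}),$$
with $\nabla \Phi_t$ as defined in the statement preceding the claim.

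To verify this identity I would split $\Phi_t = L + R$ into its linear piece $L(X) \coloneqq \eta \sum_{s=1}^t \nabla_s \bullet X$ and its entropy piece $R(X) = \trace(X \log X)$. The $\alpha$-derivative of $L(\omega_\alpha)$ at $\alpha = 0$ is immediate from bilinearity of the trace inner product and equals $\eta \sum_{s=1}^t \nabla_s \bullet (\omega_t - \omega_{t+1})$. The substantive step is to show
$$\left.\frac{d}{d\alpha}\right|_{\alpha = 0^+} \trace\!\bigl(\omega_\alpha \log \omega_\alpha\bigr) \;=\; \trace\!\bigl((\mathbbm{I} + \log\omega_{t+1})(\omega_t - \omega_{t+1})\bigr).$$
Here Claim~\ref{claim2} is crucial: because $\omega_{t+1} \succ 0$, the spectrum of $\omega_\alpha$ stays inside a compact subset of $(0,\infty)$ for all sufficiently small $\alpha \ge 0$, so $\log \omega_\alpha$ is well-defined and depends real-analytically on $\alpha$ in that one-sided neighbourhood. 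I would then invoke the standard fact that for a spectral function $F(X) = \trace(g(X))$ with $g$ a $C^{1}$ scalar function, the Fr\'echet derivative with respect to the trace inner product is given by $g'(X)$; applied to $g(x) = x\log x$ this yields $\nabla R(X) = \mathbbm{I} + \log X$, and the chain rule then gives exactly the displayed identity.

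Adding the two contributions gives $h'(0^+) = \nabla \Phi_t(\omega_{t+1}) \bullet (\omega_t - \omega_{t+1})$, and combining with $h'(0^+) \ge 0$ finishes the proof. The main obstacle is the differentiation of $\log \omega_\alpha$: since $\omega_t$ and $\omega_{t+1}$ generally do not commute, one cannot naively push the scalar derivative of $x \log x$ through the trace. However, once one expands $\log \omega_\alpha$ via the Daleckii--Krein formula in the eigenbasis of $\omega_{t+1}$, the off-diagonal cross-terms cancel by the cyclicity of the trace, leaving precisely the clean expression above. This cancellation is exactly the reason the formal ``gradient'' $\mathbbm{I} + \log X$ behaves like a genuine gradient with respect to the trace inner product, reducing Claim~\ref{claim_opt} to the textbook convex optimality condition specialized to the Hermitian setting.
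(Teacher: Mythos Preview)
Your argument is correct. You recognize the claim as the first-order optimality condition for the minimizer $\omega_{t+1}$ of $\Phi_t$ over $\K$, and you establish it by directly computing the directional derivative of $\alpha \mapsto \Phi_t(\omega_\alpha)$ at $\alpha=0$, using Claim~\ref{claim2} to guarantee that $\omega_{t+1} \succ 0$ so that $x \log x$ is smooth on the relevant spectrum. The identification $\tfrac{d}{d\alpha}\big|_{\alpha=0}\trace(g(\omega_\alpha)) = \trace(g'(\omega_{t+1})(\omega_t-\omega_{t+1}))$ is indeed the standard trace-calculus fact (the Daleckii--Krein off-diagonal terms vanish under the trace), and your invocation of it is sound.

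The paper takes a different route: rather than proving that $\nabla\Phi_t$ is literally the derivative, it treats $\nabla\Phi_t$ as a \emph{formal} object, observes algebraically that the associated Bregman-type remainder $B_{\Phi_t}(\bar X \,\|\, \omega_{t+1})$ coincides with the quantum relative entropy, and then invokes Audenaert's inequality $D(\bar X \,\|\, \omega_{t+1}) \le \trace((\bar X-\omega_{t+1})^2)/\lambda_{\min}(\omega_{t+1})$ to show this remainder is $\operatorname{O}(a^2)$; a contradiction with minimality follows for small~$a$. Your approach is more direct and self-contained, needing no external estimate on relative entropy, but it does require justifying matrix differentiation over the complex Hermitian cone. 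The paper's approach avoids that justification entirely---fitting its stated concern that ``convexity needs to be carefully defined with respect to the complex domain''---at the cost of importing the Audenaert bound. Either way, both arguments ultimately exploit that the second-order remainder is $o(a)$, and both rely on Claim~\ref{claim2} in the same essential way.
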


Denote $$B_{\Phi_t}( \omega_t\| \omega_{t+1}) \coloneqq \Phi_t(
\omega_t) - \Phi_t( \omega_{t+1}) - \nabla \Phi_t( \omega_{t+1})\bullet
( \omega_t- \omega_{t+1})
\enspace.$$
Then by the Pinsker inequality (see, for example, \cite{pinsker} and the references therein),
\begin{align*}
\frac{1}{2} \| \omega_t -  \omega_{t+1}\|_{\trace}^2 &\le  \trace(
\omega_t\log  \omega_t) - \trace( \omega_t\log  \omega_{t+1}) =
B_{\Phi_t}( \omega_t\| \omega_{t+1})
\enspace.
\end{align*}
We have
\begin{align}
B_{\Phi_t}( \omega_t\| \omega_{t+1}) & =\Phi_t( \omega_t) - \Phi_t(
\omega_{t+1}) - \nabla \Phi_t( \omega_{t+1})\bullet ( \omega_t- \omega_{t+1}) \nonumber\\
 &\le \Phi_t( \omega_t) - \Phi_t( \omega_{t+1}) \nonumber\\
 &= \Phi_{t-1}( \omega_t) - \Phi_{t-1}( \omega_{t+1}) +\eta \nabla_t
\bullet( \omega_t- \omega_{t+1})\nonumber\\
 &\le \eta \nabla_t \bullet ( \omega_t- \omega_{t+1}) \enspace, 
\end{align}
where the first inequality follows from Claim~\ref{claim_opt}, and the second
because $\Phi_{t-1}( \omega_t) \le \Phi_{t-1}( \omega_{t+1})$ ($ \omega_t$
minimizes~$\Phi_{t-1}(X)$).  Therefore
\begin{equation}\label{eq1}
\frac{1}{2} \| \omega_t -  \omega_{t+1}\|_{\trace}^2 \le  \eta \nabla_t
\bullet ( \omega_t- \omega_{t+1})
\enspace.
\end{equation}
Let~$\| M \|_{\trace}^*$ denote the dual of the trace norm, i.e.,
the spectral norm of the matrix~$M$.
By Generalized Cauchy-Schwartz~\cite[Exercise IV.1.14, page~90]{Bhatia97},
\begin{align*}
\nabla_t \bullet ( \omega_t- \omega_{t+1}) &\le \|\nabla_t\|_{\trace}^*
\; \| \omega_t- \omega_{t+1}\|_{\trace}\\
&\le \|\nabla_t\|_{\trace}^* \sqrt{2\eta\nabla_t \bullet ( \omega_t-
\omega_{t+1}) }
\enspace. & \mbox{by\ Eq.~\eqref{eq1}.}
\end{align*}
Rearranging,
$$
\nabla_t \bullet ( \omega_t- \omega_{t+1})\le 2\eta \|\nabla_t\|_{\trace}^{*2}\le 2\eta G_R^2
\enspace,
$$
where $G_R$ is an upper bound on $\|\nabla_t\|_{\trace}^*$.
Combining with Eq.~\eqref{lem5.2}, we arrive at the following bound
\begin{align*}
\sum_{t=1}^T \nabla_t\bullet( \omega_t- \varphi) \le \sum_{t=1}^T \nabla_t
\bullet ( \omega_t -  \omega_{t+1}) + \frac{1}{\eta} D_R^2
\le 2\eta TG_R^2 + \frac{1}{\eta}D_R^2 \enspace.
\end{align*}
Taking $\eta = \frac{D_R}{G_R\sqrt{2T}}$, we get
$
\sum_{t=1}^T \nabla_t\bullet( \omega_t- \varphi) \le 2D_RG_R\sqrt{2T}
$.
Going back to the regret bound,
$$
\sum_{t=1}^T [\ell_t(\trace(E_t \omega_t)) - \ell_t(\trace(E_t \varphi))] \le
\sum_{t=1}^T \nabla_t\bullet( \omega_t- \varphi) \le 2D_RG_R\sqrt{2T}
\enspace.
$$
We proceed to show that~$D_R =\sqrt{(\log 2)n}$. Let~$\Delta_{2^n}$ denote the
set of probability distributions over~$[2^n]$. By definition,
$$
D_R^2 = \max_{ \varphi, \varphi'\in\K} \{R(\varphi)-R(\varphi')\} = \max_{\varphi \in\K} -R(\varphi ) = \max_{\lambda\in\triangle_{2^n}} \sum_{i=1}^{2^n} \lambda_i\log\frac{1}{\lambda_i}
= n \log 2 
\enspace.
$$

Since the dual norm of the trace norm is the spectral norm, we have
\begin{align*}
\|\nabla_t\|_{\trace}^* &= \|\ell_t'(\trace(E_t \omega_t))E_t\|
\le L\|E_t\|\le L \enspace.
\end{align*}
Therefore $\sum_{t=1}^T [(\ell_t(\trace(E_t \omega_t)) - \ell_t(\trace(E_t
\varphi))] \le 2L\sqrt{(2 \log 2) nT\/}$.
\end{proof}


\begin{proof}[Proof of Claim~\ref{claim2}]
Let $P\in\K$ be such that $\lambda_{\min}(P) = 0$. Suppose $P=VQV^\dagger$, where $Q$ is a diagonal matrix with real values on the diagonal. Assume
that~$Q_{1,1} = \lambda_{\max}(P)$ and~$Q_{2^n,2^n} = \lambda_{\min}(P)
= 0$. Let $P' = VQ'V^\dagger$ such that $Q'_{1,1} = Q_{1,1}-\varepsilon$, $Q'_{2^n, 2^n} = \varepsilon$ for $\varepsilon < \lambda_{\max}(P)$, and $Q'_{ii} = Q_{ii}$ for $i\in
\{2, 3, ..., 2^n-1\}$, so $P'\in\K$. We show that there exists $\varepsilon > 0$ such that $\Phi_t(P')\le \Phi_t(P)$. Expanding both sides of the inequality,
we see that it is equivalent to showing that for some $\varepsilon$,
\begin{align*}
\eta \sum_{s=1}^t \nabla_s \bullet (P'-P) \le \lambda_1(P)\log\lambda_1(P) - \lambda_1(P')\log\lambda_1(P')-\varepsilon\log\varepsilon
\enspace.
\end{align*}
Let $\alpha=\lambda_1(P) = Q_{1,1}$, and $A=\eta \sum_{s=1}^t \nabla_s$.
The inequality then becomes
\begin{align*}
A \bullet (P'-P) \le \alpha\log \alpha - (\alpha-\varepsilon)\log(\alpha-\varepsilon)-\varepsilon\log\varepsilon
\enspace.
\end{align*}
Observe that $\left\Vert A\right\Vert  \le \eta \sum_{s=1}^t \left\Vert \nabla_s\right\Vert
= \eta \sum_{s=1}^t \left\Vert \ell'_s(\trace(E_s \omega_s)) E_s
\right\Vert \le \eta L t$.
So by the Generalized Cauchy-Schwartz inequality,
\begin{align*}
A \bullet (P'-P) &\le
\eta L t \left\Vert P'-P\right\Vert _{\trace} \le 2\varepsilon \eta L t
\enspace.
\end{align*}
Since $\eta, t, \alpha, L$ are finite and $-\log \varepsilon\rightarrow \infty$ as $\varepsilon\rightarrow 0$, there exists $\varepsilon$ small such that $2\eta
L t\le \log \alpha-\log\varepsilon$. We have
\begin{align*}
2\eta L t\varepsilon &\le \varepsilon\log \alpha - \varepsilon\log\varepsilon\\
&= \alpha\log \alpha - (\alpha-\varepsilon)\log \alpha -\varepsilon\log\varepsilon \\
&\le \alpha\log \alpha - (\alpha-\varepsilon)\log (\alpha-\varepsilon) -\varepsilon\log\varepsilon
\enspace.
\end{align*}
So there exists $\varepsilon > 0$ such that $\Phi_t(P') \le \Phi_t(P)$.
If $P$ has multiple eigenvalues that are 0, we can repeat the proof and
show that there exists a PD matrix $P'$ such that $\Phi_t(P')\le
\Phi_t(P)$. Since $ \omega_t$ is a minimizer of~$\Phi_{t-1}$ and $
\omega_1 \succ 0$, we conclude that $ \omega_t\succ 0$ for all $t$.
\end{proof}


\begin{proof}[Proof of Claim~\ref{claim_opt}.]
Suppose $\nabla \Phi_t( \omega_{t+1})\bullet ( \omega_t- \omega_{t+1}) <
0$. Let $a \in (0, 1)$ and $\bar{X} = (1-a)  \omega_{t+1} + a
\omega_t$, then $\bar{X}$ is a density matrix and is positive definite.
Define $\triangle = \bar{X} -  \omega_{t+1} = a( \omega_{t} -  \omega_{t+1})$. We have
\begin{align*}
\Phi_t(\bar{X}) - \Phi_t( \omega_{t+1}) &= a \nabla \Phi_t(
\omega_{t+1})\bullet ( \omega_{t} -  \omega_{t+1}) +
B_{\Phi_t}(\bar{X}\| \omega_{t+1})\\
&\le a \nabla \Phi_t( \omega_{t+1})\bullet ( \omega_{t} -  \omega_{t+1})
+ \frac{\trace(\triangle^2)}{\lambda_{\min}( \omega_{t+1})}\\&=a \nabla
\Phi_t( \omega_{t+1})\bullet ( \omega_{t} -  \omega_{t+1}) + \frac{a^2
\, \trace(( \omega_{t} -  \omega_{t+1})^2)}{\lambda_{\min}( \omega_{t+1})} \enspace.
\end{align*}
The above inequality is due to \cite[Theorem 2]{audenaert}. Dividing by $a$ on both sides, we have
\begin{align*}
\frac{\Phi_t(\bar{X}) - \Phi_t( \omega_{t+1})}{a} &\le \nabla \Phi_t(
\omega_{t+1})\bullet ( \omega_{t} -  \omega_{t+1}) + \frac{a\trace((
\omega_{t} -  \omega_{t+1})^2)}{\lambda_{\min}( \omega_{t+1})}
\enspace.
\end{align*}
So we can find $a$ small enough such that the right hand side of the
above inequality is negative. However, we would have $\Phi_t(\bar{X}) -
\Phi_t( \omega_{t+1}) < 0$, which is a contradiction. So $\nabla \Phi_t(
\omega_{t+1})\bullet ( \omega_t- \omega_{t+1}) \ge 0$.
\end{proof}

\section{Proof of Theorem~\ref{thm:main-mmw}}
\label{sec:mmw-main-proof}

\begin{proof}[Proof of Theorem~\ref{thm:main-mmw}] Note that for any density matrix $\varphi$, we have $M_t \bullet
\varphi = \frac{1}{L}\ell_t'(\trace(E_t \omega_t))\trace(E_t\varphi)$. Then, the regret bound for Matrix Multiplicative Weights \cite[Theorem 3.1]{arora-kale} implies that for any density matrix $\varphi$, we have
\[\sum_{t=1}^T \ell_t'(\trace(E_t \omega_t))\trace(E_t \omega_t) \leq
\sum_{t=1}^T \ell_t'(\trace(E_t \omega_t))\trace(E_t\varphi) + \eta LT + \frac{L\log(2^n)}{\eta}
\enspace. \]
Here, we used the bound $M_t^2 \bullet  \omega_t \leq 1$. Next, since $\ell_t$ is convex, we have 
\[\ell_t'(\trace(E_t \omega_t))\trace(E_t \omega_t) - \ell_t'(\trace(E_t
\omega_t))\trace(E_t\varphi) \geq \ell_t(\trace(E_t \omega_t)) - \ell_t(\trace(E_t\varphi))
\enspace.\]
Using this bound, and the stated value of $\eta$, we get the required regret bound.
\end{proof}

\section{Proof of Theorem~\ref{thm:postselection-main}}
\label{sec:postselection-main-proof}

\begin{proof}[Proof of Theorem~\ref{thm:postselection-main}]
Let $\rho^{\ast}:=\rho^{\otimes k}$\ be an amplified version of $\rho$, 
over a Hilbert space of dimension $D \coloneqq 2^{kn}$, for some $k$ to be set
later. \ Throughout, we maintain a classical description of a
$D$-dimensional \textquotedblleft amplified hypothesis state\textquotedblright%
\ $\omega_{t}^{\ast}$, which we view as being the state of~$k$ registers
with~$n$ qubits each. We ensure that~$\omega_{t}^{\ast}$ is always symmetric 
under permuting the $k$
registers. \ Given $\omega_{t}^{\ast}$, our actual~$n$-qubit hypothesis state
$\omega_{t}$\ is then obtained by simply tracing out $k-1$\ of the registers.

Given an amplified hypothesis state $\omega^{\ast}$, let $E_{t}^{\ast}%
$\ be a two-outcome measurement\ that acts on $\omega^{\ast}$\ as follows: it
applies the measurement $E_{t}$\ to each of the $k$ registers separately, and
accepts if and only if the fraction of measurements that accept equals
$b_{t}$, up to an additive error at most $\varepsilon/2$.

Here is the learning strategy. \ Our initial hypothesis, $\omega_{0}^{\ast
}:=\mathbbm{I}/D$, is the $D$-dimensional maximally mixed state, corresponding to
$\omega_{0} \coloneqq \mathbbm{I}/2^{n}$. (The \emph{maximally mixed state\/}
corresponds to the notion of a uniformly random quantum superposition.)
\ For each $t\geq1$, we are given descriptions of the
measurements~$E_{1},\ldots,E_{t}$, as well as real numbers $b_{1}%
,\ldots,b_{t}$ in~$[0,1]$, such that $\left\vert b_{i}-\trace \left(
E_{i}\rho\right)  \right\vert \leq \varepsilon / 3$ for all $i\in\left[  t\right]  $. \ We
would like to update our old hypothesis $\omega_{t-1}^{\ast}$ to a new hypothesis
$\omega_{t}^{\ast}$, ideally such that the difference $\left\vert \trace \left(
E_{t+1}\omega_{t}\right)  -\trace \left(  E_{t+1}\rho\right)
\right\vert $\ is small.  We do so as follows:

\begin{itemize}
\item Given $b_{t}$, as well classical descriptions of $\omega_{t-1}%
^{\ast}$\ and $E_{t}$, decide whether $\trace \!\left(  E_{t}^{\ast
}\omega_{t-1}^{\ast}\right)  \geq1-\frac{\varepsilon}{6}$.

\item If yes, then set $\omega_{t}^{\ast}:=\omega_{t-1}^{\ast}$ (i.e.,
we do not change the hypothesis).

\item Otherwise, let $\omega_{t}^{\ast}$\ be the state obtained by applying
$E_{t}^{\ast}$\ to $\omega_{t-1}^{\ast}$ and postselecting on $E_{t}^{\ast}$\ accepting.
In other words, $\omega_{t}^{\ast} \coloneqq {\mathcal
M}(\omega_{t-1}^{\ast})$, where~${\mathcal M}$ is the operator
that postselects on acceptance by~$E_{t}^{\ast}$ (as defined above).
\end{itemize}

We now analyze this strategy. \ Call $t$ \textquotedblleft
good\textquotedblright\ if $\trace \!\left(  E_{t}^{\ast}\omega
_{t-1}^{\ast}\right)  \geq1-\frac{\varepsilon}{6}$, and \textquotedblleft
bad\textquotedblright\ otherwise. \ Below, we show that

\begin{enumerate}
\item[(i)] there are at most $\operatorname{O}\!\!\left(  \frac{n}{\varepsilon^{3}}\log
\frac{n}{\varepsilon}\right)  $\ bad $t$'s, and

\item[(ii)] for each good $t$, we have $\left\vert \trace \!\left(
E_{t}\omega_{t-1}\right)  - \trace(E_t \rho) \right\vert \leq\varepsilon$.
\end{enumerate}

We start with claim (i). \ Suppose there have been $\ell$\ bad $t$'s, call
them $t\!\left(  1\right)  ,\ldots,t\!\left(  \ell\right)  $, where $\ell
\leq\left(  n/\varepsilon\right)  ^{10}$\ (we justify this last assumption
later, with room to spare). \ Then there were $\ell$\ events where we 
postselected on $E_{t}^{\ast}$ accepting $\omega_{t-1}^{\ast}$.
We conduct a thought experiment, in which the learning strategy
maintains a quantum register initially in the maximally mixed
state~$\mathbbm{I}/D$, and applies the postselection operator
corresponding to~$E_t^\ast$ to the quantum register whenever~$t$ is bad. \ Let
$p$ be the probability that all $\ell$ of these postselection events succeed. \ Then by
definition,%
\[
p=\trace \!\left(  E_{t\left(  1\right)  }^{\ast}\omega_{t\left(
1\right)  -1}^{\ast}\right)  \cdots\trace \!\left(  E_{t\left(
\ell\right)  }^{\ast}\omega_{t\left(  \ell\right)  -1}^{\ast}\right)
\leq\left(  1-\frac{\varepsilon}{6}\right)  ^{\ell}.
\]
On the other hand, suppose counterfactually that we had started with the
\textquotedblleft true\textquotedblright\ hypothesis, $\omega_{0}^{\ast}%
\coloneqq \rho^{\ast}=\rho^{\otimes k}$. \ In that case, we would have%
\begin{align*}
\trace \!\left(  E_{t\left(  i\right)  }^{\ast}\rho^{\ast}\right)   &
=\Pr\left[  E_{t\left(  i\right)  }\text{ accepts }\rho\text{\ between
}\left(  b_{t\left(  i\right)  }-\frac{\varepsilon}{2}\right)
k\text{\ and }\left(  b_{t\left(  i\right)  }+\frac{\varepsilon}%
{2}\right)  k\text{\ times}\right] \\
&  \geq1-2\operatorname{e}^{-2 k \left(  \varepsilon/6\right)^{2}}
\end{align*}
for all $i$. \ Here the second line follows from
the condition that $\left\vert \trace \!\left(  E_{t\left(  i\right)  }%
\rho\right)  -b_{t\left(  i\right)  }\right\vert \leq\varepsilon/6$,
together with the Hoeffding bound.

We now make the choice $k:=\frac{C}{\varepsilon^{2}}\log\frac{n}{\varepsilon}%
$, for some constant $C$ large enough that%
\[
\trace \!\left(  E_{t\left(  i\right)  }^{\ast}\rho^{\ast}\right)
\geq1-\frac{\varepsilon^{10}}{400n^{10}}%
\]
for all $i$. \ So by Theorem~\ref{qub}, all $\ell$\ postselection events would
succeed with probability at least%
\[
1-2 \sqrt{\ell \frac{\varepsilon^{10}}{400 n^{10}}}\geq 0.9 \enspace.
\]
We may write the maximally mixed state, $\mathbbm{I}/D$, as%
\[
\frac{1}{D}\rho^{\ast}+\left(  1-\frac{1}{D}\right)  \xi \enspace,
\]
for some other mixed state $\xi$. \ For this reason, even when we start with
initial hypothesis $\omega_{0}^{\ast}=\mathbbm{I}/D$, all $\ell$\ postselection events
still succeed with probability%
\[
p\geq\frac{0.9}{D} \enspace.
\]
Combining our upper and lower bounds on $p$ now yields%
\[
\frac{0.9}{2^{kn}}\leq\left(  1-\frac{\varepsilon}{6}\right)  ^{\ell}%
\]
or%
\[
\ell=\operatorname{O} \!\left(  \frac{kn}{\varepsilon}\right)  =\operatorname{O} \!\left(  \frac{n}{\varepsilon
^{3}}\log\frac{n}{\varepsilon}\right)  ,
\]
which incidentally justifies our earlier assumption that $\ell\leq\left(
n/\varepsilon\right)  ^{10}$.

It remains only to prove claim (ii). \ Suppose that%
\begin{equation}
\label{eq-case-ii}
\trace  \!\left(  E_{t}^{\ast}\omega_{t-1}^{\ast}\right)  \geq
1-\frac{\varepsilon}{6} \enspace.
\end{equation}
Imagine measuring~$k$ quantum registers prepared in the joint
state~$\omega_{t-1}^\ast$, by applying~$E_t$ to each register. Since the
state~$\omega_{t-1}^\ast$ is symmetric under permutation of the~$k$
registers, we have that~$\trace(E_t \omega_{t-1})$, the probability 
that~$E_t$ accepts the first register, equals the expected fraction
of the~$k$ registers that~$E_t$ accepts.
The bound in Eq.~\eqref{eq-case-ii} means that, with probability at 
least $1-\frac{\varepsilon}{6}$ over the measurement outcomes, the
fraction of registers which~$E_{t}$ accepts is within
$\pm \varepsilon/2$\ of $b_{t}$. \ 
The~$k$ measurement outcomes are not necessarily independent, but the fraction of registers accepted 
never differs from $b_{t}$ by more than~$1$. \ So by the union bound, we have
\[
\left\vert \trace \!\left(  E_{t}\omega_{t-1}\right)  -b
_{t}\right\vert \leq\frac{\varepsilon}{2}+\frac{\varepsilon}{6}= \frac{2 \varepsilon}{3}
\enspace.
\]
Hence by the triangle inequality,
\[
\left\vert \trace \!\left(  E_{t}\omega_{t-1}\right)  - \trace(E_t \rho)
\right\vert \leq\frac{2 \varepsilon}{3} + \left| b_t - \trace(E_t \rho) \right| \leq \varepsilon
\enspace,
\]
as claimed.
\end{proof}

\section{Proof of Corollary~\ref{cor-se}}
\label{sec:cor-se-proof}

We begin with a bound for a generalization of \textquotedblleft random
access
coding\textquotedblright\ (\cite{Nayak99,antv}) or what is also known 
as the Index function
problem in communication complexity. \ The generalization was called
\textquotedblleft serial encoding\textquotedblright\
by~\cite{Nayak99} and arose in the context of quantum finite automata.
\ The serial encoding problem is also called
Augmented Index in the literature on streaming algorithms.

The following theorem places a bound on how few qubits serial encoding
may use. In other words, it
bounds the number of bits we may encode in an~$n$-qubit quantum state
when an arbitrary bit out of the~$n$ may be recovered well via a
two-outcome measurement. The bound holds even when the measurement for
recovering~$y_i$ may depend adaptively on the previous bits~$y_1 y_2
\dotsb y_{i-1}$ of~$y$, \emph{which we need not know\/}.

\begin{theorem}[\cite{Nayak99}]
\label{thm-se}
Let~$k$ and~$n$ be positive integers. For each~$k$-bit
string~$y \coloneqq y_{1}\dotsb y_{k}$, let~$\rho_{y}$ be an~$n$-qubit
mixed
state such that for each~$i \in \{1, 2, \dotsc, k \}$, there is a
two-outcome measurement~$E$ that depends only on~$i$ and the prefix~$y_1
y_2
\dotsb y_{i-1}$, and has the following properties
\begin{enumerate}
\item[(i)] if~$y_{i} = 0$ then~${\mathrm{Tr}}(E \rho_{y}) \le p$, and

\item[(ii)] if~$y_{i} = 1$ then~${\mathrm{Tr}}(E\rho_{y}) \ge 1 - p$,
\end{enumerate}
\noindent where~$p \in [0,1/2]$ is the error in predicting
the bit~$y_i$ at vertex~$v$.
(We say~$\rho_y$ \textquotedblleft serially
encodes\textquotedblright~$y$.)
\ Then~$n\geq(1-\mathrm{H}(p))k$.
\end{theorem}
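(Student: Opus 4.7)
The plan is to adapt Nayak's original information-theoretic argument for quantum random access codes to the adaptive (serial) setting by combining Holevo's theorem, the chain rule for quantum mutual information, the quantum data-processing inequality, and Fano's inequality. First I would randomize the input: let $Y = Y_1 Y_2 \dotsb Y_k$ be uniformly distributed in $\{0,1\}^k$ and let $R$ be the $n$-qubit register holding $\rho_Y$, so that the joint classical--quantum state is $\sigma_{YR} = \frac{1}{2^k}\sum_y |y\rangle\!\langle y|_Y \otimes \rho_y^R$. The two upper and lower bounds I want to glue together are the following: an \emph{upper} bound on $I(Y : R)_\sigma$ via Holevo, and a \emph{lower} bound of $(1-\mathrm{H}(p))k$ via the adaptive recoverability hypothesis.

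For the upper bound, since $\sigma_{YR}$ is a cq-state, standard manipulations yield $I(Y:R)_\sigma = \chi(\{2^{-k}, \rho_y\}) \le S(\bar{\rho}) \le n$, where $\bar{\rho} = 2^{-k}\sum_y \rho_y$ is an $n$-qubit state. For the lower bound, I would apply the chain rule for quantum mutual information on the classical register~$Y$:
\[
I(Y : R)_\sigma \;=\; \sum_{i=1}^{k} I(Y_i : R \mid Y_1 \dotsb Y_{i-1})_\sigma \enspace.
\]
For each fixed prefix $y_{<i} \coloneqq y_1 \dotsb y_{i-1}$, the hypothesis supplies a two-outcome measurement $E = E_i^{y_{<i}}$ acting on $R$ whose outcome $M_i$ satisfies $\Pr[M_i \ne Y_i \mid Y_{<i} = y_{<i}] \le p$. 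Fano's inequality (applied to the conditional distribution of $(Y_i, M_i)$ given $Y_{<i}=y_{<i}$, using that $Y_i$ remains uniform given $Y_{<i}$) gives $H(Y_i \mid M_i, Y_{<i} = y_{<i}) \le \mathrm{H}(p)$, whence $I(Y_i : M_i \mid Y_{<i} = y_{<i}) \ge 1 - \mathrm{H}(p)$. The quantum data-processing inequality, applied to the measurement map $R \mapsto M_i$, then yields $I(Y_i : R \mid Y_{<i} = y_{<i}) \ge I(Y_i : M_i \mid Y_{<i} = y_{<i}) \ge 1 - \mathrm{H}(p)$. Averaging over $y_{<i}$ and summing over $i$ gives $I(Y : R)_\sigma \ge (1-\mathrm{H}(p))k$. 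Combining with the Holevo upper bound completes the proof.

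The main obstacle I expect is in handling the adaptive nature of the decoding measurements correctly at the level of conditional quantum mutual information. Because the measurement used for bit~$i$ is a \emph{function} of the classical prefix $Y_{<i}$, one must verify that conditioning on the classical register $Y_{<i}$ in $\sigma_{YR}$ indeed reduces the problem, for each fixed prefix value, to a standard (non-adaptive) cq-state to which Fano and data processing apply. This is morally routine---conditional mutual information of a cq-state is the expectation over the classical conditioning value of the mutual information of the conditional cq-state---but writing it out cleanly, and carefully invoking the chain rule (which I would appeal to as a standard fact, e.g.\ from a quantum information reference such as \cite{Watrous18}), is the technical heart of the argument.
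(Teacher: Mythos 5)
Your proposal is correct and follows essentially the same route as the paper: the paper cites Theorem~\ref{thm-se} to Nayak but proves its generalization, Theorem~\ref{thm-arac} in Appendix~\ref{sec-rac}, by exactly this argument---bounding $\operatorname{I}(Y:Q)\le n$, decomposing it via the chain rule over the (there, adaptively ordered) bits, and lower-bounding each conditional term by $1-\mathrm{H}(p)$ via Fano's inequality applied to the prefix-dependent measurement. Your version is the direct specialization to the fixed ordering $i_l = l$, and the point you flag as the technical heart (conditioning on the classical prefix to reduce to a non-adaptive cq-state before invoking Fano and data processing) is handled the same way there.
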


In Appendix~\ref{sec-rac}, we present a strengthening of this bound
when the bits of~$y$ may be only be recovered in an adaptive order that
is \textit{a priori\/} unknown. The stronger bound may be of independent
interest.

In the context of online learning, the measurements used in recovering
bits
from a serial encoding are required to predict the bits with probability
bounded away from given ``pivot points''. Theorem~\ref{thm-se} may be
specialized to this case as in Corollary~\ref{cor-se}, which we prove
below.

\begin{proof}[Proof of Corollary~\ref{cor-se}]
\ This is a consequence of Theorem~\ref{thm-se}, when combined with 
the following observation. \ Given the
measurement operator~$E^{\prime}$, parameter~$\varepsilon$, and
pivot point~$a_{v}$ as in the statement of the corollary, we define 
a new two-outcome measurement~$E$ to be associated with vertex~$v$:
\[
E\quad:=\quad%
\begin{cases}
\frac{E^{\prime}}{2a_{v}} & \text{ if }a_{v}\geq\frac{1}{2}
    \enspace, \quad \text{ and } \vspace{1.5ex} \\
\frac{1}{2(1 - a_v)} \left( E^{\prime} + (1 - 2 a_{v}) \mathbb{I} \right)
     & \text{ if }a_{v}<\frac{1}{2} \enspace.
\end{cases}
\]
The measurement~$E$ may be interpreted as producing a fixed 
outcome~$0$ or~$1$ with some probability depending on~$a_{v}$,
and applying the given measurement~$E^{\prime}$ with the remaining 
probability, so as to translate the pivot point~$a_{v}$ to~$1/2$.

We may verify that the operator~$E$ satisfies the requirements~(i)
and~(ii) of
Theorem~\ref{thm-se} with~$p \coloneqq (1-\varepsilon)/2$.
We therefore conclude that~$n \ge (1 -
\mathrm{H}((1 - \varepsilon)/2) k$. Since~H$(1/2 - \delta) \le 1 - 2
\delta^2$, for~$\delta \in [0,1/2]$, we get~$k = \operatorname{O}
\!\left( n / \varepsilon^2 \right)$.
\end{proof}

\section{Lower bound on quantum random access codes}
\label{sec-rac}

Here we present an alternative proof of the linear lower bound on
quantum random access codes~\cite{Nayak99,antv}. It goes via the Matrix
Multiplicative Weights algorithm, but gives us a slightly weaker 
dependence on decoding error.  We also present an extension of the 
original bound to more general codes. These may be of independent interest.

\begin{theorem}
Let $k$ and $n$ be positive integers with $k > n$. For all $k$-bit strings $y = y_1, y_2, \ldots, y_k$, let $\rho_y$ be the $n$-qubit quantum mixed state that encodes $y$. Let $p \in [0, 1/2]$ be an error tolerance parameter.
Suppose that there exist measurements $E_1, E_2, \ldots, E_k$ such that for all $y \in \{0, 1\}^k$ and all $i \in [k]$, we have $|\trace(E_i\rho_y) - y_i| \leq p$. Then $n \geq \frac{(1/2-p)^2}{4(\log 2)}k$.
\end{theorem}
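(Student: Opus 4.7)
The plan is to apply the Matrix Multiplicative Weights regret bound of Theorem~\ref{thm:main-mmw} to an online learning game driven by a uniformly random input string, and then compare expected values to extract the dimension lower bound. Regard $y = y_1 \dotsc y_k \in \{0,1\}^k$ as uniformly random, and consider the online learning game over $T = k$ rounds in which at round~$i$ the learner first outputs a hypothesis $\omega_i \in C_n$ and then observes the measurement $E_i$ together with the $1$-Lipschitz convex loss $\ell_i(x) = |x - y_i|$. For every fixed $y$, the assumed decoding guarantee gives
\[
\min_{\omega \in C_n} \sum_{i=1}^k \ell_i(\trace(E_i \omega)) \;\leq\; \sum_{i=1}^k |\trace(E_i\rho_y) - y_i| \;\leq\; pk,
\]
since $\rho_y$ itself is a feasible comparator with per-round loss at most $p$. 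Running MMW with the parameter setting from Theorem~\ref{thm:main-mmw} (for $L = 1$ and $T = k$) therefore yields, for every $y$,
\[
\sum_{i=1}^k |\trace(E_i\omega_i) - y_i| \;\leq\; pk + 2\sqrt{(\log 2)\, k n}.
\]

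Next I would take expectations over the uniformly random $y$. The MMW update is deterministic, so $\omega_i$ is a function only of the fixed measurements $E_1, \dotsc, E_{i-1}$ and the past feedback $y_1, \dotsc, y_{i-1}$. Conditioned on $y_{<i}$, $\trace(E_i\omega_i)$ is therefore a fixed number in $[0,1]$, while $y_i$ is independent of $y_{<i}$ and uniform on $\{0,1\}$, so
\[
\E\bigl[\, |\trace(E_i\omega_i) - y_i| \,\bigm|\, y_{<i}\, \bigr] \;=\; \tfrac{1}{2}\trace(E_i\omega_i) + \tfrac{1}{2}(1 - \trace(E_i\omega_i)) \;=\; \tfrac{1}{2}.
\]
Summing over $i$ gives $\E_y\!\left[\, \sum_{i=1}^k |\trace(E_i\omega_i) - y_i| \,\right] = k/2$. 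Taking expectations in the per-$y$ regret bound yields $k/2 \leq pk + 2\sqrt{(\log 2)\, kn}$, which rearranges to $(1/2 - p)\,k \leq 2\sqrt{(\log 2)\, kn}$; squaring produces the advertised bound $n \geq \frac{(1/2 - p)^2}{4\log 2}\, k$.

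The only substantive step is the averaging argument: because the MMW learner's round-$i$ prediction depends on the input only through $y_{<i}$, against a uniformly random $y$ \emph{every} such deterministic learner suffers expected $L_1$ loss exactly $1/2$ per round, no matter what hypothesis $\omega_i$ it produces. Everything else is a black-box application of Theorem~\ref{thm:main-mmw}; no further quantum-specific input is needed beyond the MMW regret analysis.
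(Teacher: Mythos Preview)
Your proof is correct and follows essentially the same route as the paper: invoke the MMW regret bound of Theorem~\ref{thm:main-mmw} with $L_1$ loss, use $\rho_y$ as the comparator to bound the optimum by $pk$, and force the learner to suffer loss $\tfrac12$ per round. The only difference is how you force the latter: you average over a uniformly random $y$ and use that $y_i$ is independent of $\omega_i$, whereas the paper constructs a single adversarial $y$ directly by setting $y_t = 0$ if $\trace(E_t\omega_t) > \tfrac12$ and $y_t = 1$ otherwise, which gives $\ell_t(\trace(E_t\omega_t)) \geq \tfrac12$ deterministically. Both arrive at the identical inequality $k/2 \leq pk + 2\sqrt{(\log 2)kn}$.
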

\begin{proof}
Run the MMW algorithm described in Section~\ref{sec-mmw} with the
absolute loss function~$\ell_t(x) \coloneqq |x - y_t|$ for $t = 1, 2, \ldots, k$ iterations. In iteration~$t$, provide as feedback $E_t$ and the label $y_t \in \{0, 1\}$ defined as follows:
\[ y_t = \begin{cases}
	0 & \text{ if } \trace(E_t \omega_t) > \frac{1}{2} \\
	1 & \text{ if } \trace(E_t \omega_t) \leq \frac{1}{2} \enspace.
\end{cases}\]
Let $y \in \{0, 1\}^k$ be the bit string formed at the end of the process. Then it is easy to check the following two properties by the construction of the labels: for any $t \in [k]$, we have 
\begin{enumerate}
	\item $\ell_t( \omega_t) = |\trace(E_t \omega_t) - y_t| \geq 1/2$, and

	\item $\ell_t(\rho_y)) = |\trace(E_t\rho_y) - y_t| \leq p$.
\end{enumerate}
By Theorem~\ref{thm:main-mmw}, the MMW algorithm with absolute loss has
a regret bound of $2\sqrt{(\log 2)kn}$. So the above bounds imply that $k/2 \leq pk + 2\sqrt{(\log 2)kn}$, which implies that $n \geq \frac{(1/2-p)^2}{4\log 2}k$.
\end{proof}
Note that in the above proof, we may allow the measurement in the~$i$th
iteration, i.e., the one used to decode the~$i$th bit, to depend on the
previous bits~$y_1, y_2, \dotsc, y_{i-1}$. Thus, the lower bound also
applies to serial encoding.

Next we consider encoding of bit-strings~$y$ into quantum states~$\rho_y$ 
with a more relaxed notion of decoding. The encoding is such that given
the encoding for an unknown string~$y$, \emph{some\/} bit~$i_1$ of~$y$ can be
decoded. Given the value~$y_{i_1}$ of of this bit, a new bit~$i_2$ of~$y$ 
can be decoded, and the index~$i_2$ may depend on~$y_{i_1}$. More
generally, given a sequence of bits~$y_{i_1} y_{i_2} \dotsc y_{i_j}$ 
that may be decoded in this manner, a new bit~$i_{j+1}$ of~$y$ can be
decoded, for any~$j \in \{0, 1, \dotsc, k-1 \}$. Here, the index~$i_{j+1}$
and the measurement used to recover the corresponding bit of~$y$ may
depend on the sequence of bits~$y_{i_1} y_{i_2} \dotsc y_{i_j}$.
We show that \emph{even with
this relaxed notion of decoding\/}, we cannot encode more than a linear 
number of bits into an~$n$-qubit state.

We first formalize the above generalization of random access
encoding.
We view a complete binary tree of depth~$d\geq0$ as consisting of
vertices~$v\in\left\{  0,1\right\}  ^{\leq d}$. \ The root of the tree is
labeled by the empty string~$\epsilon$ and each internal vertex~$v$ of the
tree has two children~$v0,v1$. We specify an adaptive
sequence of measurements through a \textquotedblleft measurement decision
tree\textquotedblright. \ The tree specifies the measurement to be applied
next, given a prefix of such measurements along with the corresponding outcomes.

\begin{definition}
Let~$k$ be a positive integer. \ A \emph{measurement decision tree\/} of
depth~$k$ is a complete binary tree of depth~$k$, each internal vertex~$v$ of
which is labeled by a triple~$(S,i,E)$, where~$S\in\left\{  1,\dotsc
,k\right\}  ^{l}$ is a sequence of length~$l:=\left\vert v\right\vert $ of
distinct indices, $i\in\left\{  1,\dotsc,k\right\}  $ is an index that does
not occur in~$S$, and~$E$ is a two-outcome measurement. \ The sequences
associated with the children~$v0,v1$ of~$v$ (if defined) are both equal
to~$(S,i)$.
\end{definition}

For a~$k$-bit string~$y$, and sequence~$S \coloneqq (i_{1},i_{2}, \dotsc, 
i_{l})$ with~$0\leq l\leq k$ and~$i_{j}\in\left\{  1,2,\dotsc,k\right\} $,
let~$y_{S}$ denote the substring~$y_{i_{1}}y_{i_{2}}\dotsb y_{i_{l}}$.

\begin{theorem}
\label{thm-arac} Let~$k$ and~$n$ be positive integers. For each~$k$-bit
string~$y \coloneqq y_{1}\dotsb y_{k}$, let~$\rho_{y}$ be an~$n$-qubit mixed 
state (we say~$\rho_y$ 
\textquotedblleft encodes\textquotedblright~$y$). \ Suppose there exists a
measurement decision tree~$T$ of depth~$k$ such that for each internal
vertex~$v$ of~$T$ and all~$y\in\left\{  0,1\right\}  ^{k}$ with~$y_{S}=v$,
where~$(S,i,E)$ is the triple associated with the vertex~$v$, we
have~$\left| {\mathrm{Tr}}(E \rho_{y}) - y_i \right| \le p_v$,
where~$p_{v}\in\lbrack0,1/2]$ is the error in predicting
the bit~$y_i$ at vertex~$v$. \ Then~$n\geq(1-\mathrm{H}(p))k$,
where~$\mathrm{H}$ is the binary entropy function, 
and~$p \coloneqq \tfrac{1}{k}
\sum_{l=1}^{k}\tfrac{1}{2^{l}}\sum_{v\in\left\{  0,1\right\}  ^{l}}p_{v}$
is the average error.
\end{theorem}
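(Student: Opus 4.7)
The plan is to extend Nayak's entropy-based proof of Theorem~\ref{thm-se} to the adaptive tree setting, keeping careful track of the $y$-dependent order in which bits get decoded. First, I would set up notation: let $Y$ be uniform on $\{0,1\}^k$, and for each realization $y$ let the ``correct'' path $\epsilon = V_1^*(y), V_2^*(y), \dotsc, V_k^*(y)$ be obtained by starting at the root and, at every internal vertex $v$, following the edge labeled by the true bit $y_{i_v}$. The indices $\pi_l^*(y) \coloneqq i_{V_l^*(y)}$ probed along this path form a permutation of $\{1,\ldots,k\}$. Setting $Z_l \coloneqq y_{\pi_l^*(y)}$, one sees $V_l^*(y) = Z_1 Z_2 \cdots Z_{l-1}$ by construction. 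Crucially, $(Z_1,\ldots,Z_k)$ is itself uniform on $\{0,1\}^k$: given $Z_{<l}$, the index $\pi_l^*$ is determined and lies outside $\{\pi_1^*,\ldots,\pi_{l-1}^*\}$, so $y_{\pi_l^*}$ is a fresh uniform bit of $Y$.

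Next I would introduce an ``augmented decoder'' that, at step $l$, is given $Z_{<l}$ as side information, sets $v \coloneqq Z_{<l}$ (the correct vertex at depth $l-1$), and measures $E_v$ on a fresh copy of $\rho_Y$ to produce an outcome $\tilde M_l$. Since $y_{S_v} = v$ holds by construction, the hypothesis of the theorem gives $\Pr[\tilde M_l \ne Z_l \mid Z_{<l} = z] \le p_z$, and averaging over the uniform $Z_{<l}$ gives $\Pr[\tilde M_l \ne Z_l] \le \bar p_l$ with $\bar p_l \coloneqq 2^{-(l-1)} \sum_{v \in \{0,1\}^{l-1}} p_v$. Conditional Fano, together with concavity and monotonicity of the binary entropy $H$ on $[0,1/2]$, then yields $H(Z_l \mid \tilde M_l, Z_{<l}) \le H(\bar p_l)$, so $I(Z_l;\tilde M_l \mid Z_{<l}) \ge 1 - H(\bar p_l)$.

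On the quantum side, for each fixed $z$ the augmented decoder applies a POVM depending only on $z$ to the ensemble $\{(\tfrac{1}{2}, \rho^{z,b}) : b \in \{0,1\}\}$, where $\rho^{z,b} \coloneqq \mathbb{E}_{Y \mid Z_{<l} = z,\, Z_l = b}[\rho_Y]$, so the conditional Holevo bound yields $I(Z_l;\tilde M_l \mid Z_{<l} = z) \le \chi(Z_l;\rho_Y \mid Z_{<l} = z)$. Averaging over $z$ and summing over $l$ telescopes via the classical chain rule for the Holevo quantity:
\[
\sum_{l=1}^{k} \chi(Z_l;\rho_Y \mid Z_{<l}) \;=\; S(\bar\rho) - \mathbb{E}_Y[S(\rho_Y)] \;\le\; n,
\]
where $\bar\rho \coloneqq \mathbb{E}_Y[\rho_Y]$, using that $Z$ and $Y$ are in bijection and that $\bar\rho$ is an $n$-qubit state. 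Combining with the Fano lower bound gives $k - \sum_l H(\bar p_l) \le n$; invoking concavity of $H$ to pull the average through, $\tfrac{1}{k}\sum_l H(\bar p_l) \le H(\tfrac{1}{k}\sum_l \bar p_l) = H(p)$, and hence $n \ge (1 - H(p))k$.

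The main obstacle is the conditional Holevo step: because the measurement at step $l$ depends on the side information $Z_{<l}$, which is itself a function of $Y$, the usual unconditional Holevo bound does not apply to the overall output of the augmented decoder. I would sidestep this by working with a fresh copy of $\rho_Y$ per step---legitimate because the bound on $n$ constrains only the size of the encoding and not any physical decoder---and applying Holevo separately for each fixed value of $z$ before averaging. An alternative route, following Nayak's original argument for serial encoding, is to carry out sequential measurements on a single copy and invoke Holevo monotonicity under the quantum channel representing the prior measurements to bound the post-measurement Holevo quantities by those of $\rho_Y$ itself.
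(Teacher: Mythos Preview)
Your proposal is correct and follows essentially the same route as the paper's proof: both introduce the uniform string~$Y$, follow the ``correct'' root-to-leaf path to define the reordered bits~$Z_l = Y_{\Pi(l)}$, decompose the information between~$Y$ and the encoding via the chain rule in this adaptive order, lower-bound each conditional term by Fano's inequality, and finish with concavity of~$\mathrm{H}$. The only cosmetic difference is that the paper phrases the upper bound as~$\operatorname{I}(Y:Q) \le n$ for a single classical--quantum register and applies the chain rule to quantum conditional mutual information directly, whereas you phrase it in terms of Holevo quantities~$\chi(Z_l;\rho_Y \mid Z_{<l})$ and invoke fresh copies to justify the conditional Holevo step; since~$\operatorname{I}(Z_l:Q \mid Z_{<l}) = \chi(Z_l;\rho_Y \mid Z_{<l})$ for cq-states, the two arguments coincide line by line, and your fresh-copy device (and the single-copy alternative you mention) is unnecessary once one works with~$\operatorname{I}(\cdot:Q\mid\cdot)$ on the untouched register.
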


\begin{proof}
Let~$Y$ be a uniformly random~$k$-bit string. \ We define a random
permutation~$\Pi$ of~$\left\{  1,\ldots,k\right\}  $ correlated with~$Y$ that
is given by the sequence of measurements in the root to leaf path
corresponding to~$Y$. \ More formally, let~$\Pi(1):=i$, where~$i$ is the index
associated with the root of the measurement decision tree~$T$. For~$l\in
\left\{  2,\ldots,k\right\}  $, let~$\Pi(l):=j$, where~$j$ is the index
associated with the vertex~$Y_{\Pi(1)}Y_{\Pi(2)}\dotsb Y_{\Pi(l-1)}$ of the
tree~$T$. \ Let~$Q$ be a quantum register such that the joint state of~$YQ$ is
\[
\frac{1}{2^{k}}\sum_{y\in\left\{  0,1\right\}  ^{k}}\left\vert y\right\rangle\!
\left\langle y\right\vert \otimes\rho_{y} \enspace .
\]
The quantum mutual information between~$Y$ and~$Q$ is bounded
as~$\operatorname{I}(Y:Q)\leq\left\vert Q\right\vert =n$. \ Imagine having
performed the first~$l-1$ measurements given by the tree~$T$ on state~$Q$ and
having obtained the correct outcomes~$Y_{\Pi(1)}Y_{\Pi(2)}\dotsb Y_{\Pi(l-1)}%
$. \ These outcomes determine the index~$\Pi(l)$ of the next bit that may be
learned. \ By the Chain Rule, for any~$l\in\left\{  1,\ldots,k-1\right\}  $,
\begin{align*}
& \operatorname*{I}\!\left(  Y_{\Pi(l)}\dotsb Y_{\Pi(k)}:Q\;|\;Y_{\Pi(1)}%
Y_{\Pi(2)}\dotsb Y_{\Pi(l-1)}\right) \\
&  =\operatorname*{I}\!\left(  Y_{\Pi(l)}:Q\;|\;Y_{\Pi(1)}Y_{\Pi(2)}\dotsb
Y_{\Pi(l-1)}\right)  +\operatorname*{I}\!\left(  Y_{\Pi(l+1)}\dotsb Y_{\Pi
(k)}:Q\;|\;Y_{\Pi(1)}Y_{\Pi(2)}\dotsb Y_{\Pi(l)}\right)  .
\end{align*}
Let~$E$ be the operator associated with the vertex~$V:=Y_{\Pi(1)}Y_{\Pi
(2)}\dotsb Y_{\Pi(l-1)}$. \ By hypothesis, the measurement~$E$
predicts the bit~$Y_{\Pi(l)}$ with error at most~$p_{V}$. \ Using the Fano
Inequality, and averaging over the prefix~$V$, we get
\[
\operatorname{I}\!\left(  Y_{\Pi(l)}:Q\;|\;Y_{\Pi(1)}Y_{\Pi(2)}\dotsb
Y_{\Pi(l-1)}\right)  \geq\mathbb{E}_{V}(1-\operatorname*{H}(p_{V}))
\enspace.
\]
Applying this repeatedly for~$l\in\left\{  1,\ldots,k-1\right\}  $, we get
\begin{align*}
\operatorname{I}(Y:Q)  &  =\operatorname{I}\!\left(  Y_{\Pi(1)}:Q\right)
+\operatorname{I}\!\left(  Y_{\Pi(2)}:Q\;|\;Y_{\Pi(1)}\right)  +\operatorname{I}%
\!\left(  Y_{\Pi(3)}:Q|Y_{\Pi(1)}Y_{\Pi(2)}\right) \\
& \quad \mbox{} +\dotsb+\operatorname{I}\!\left(  Y_{\Pi(k)}:Q\;|\;Y_{\Pi(1)}Y_{\Pi(2)}\dotsb
Y_{\Pi(k-1)}\right) \\
&  \geq\sum_{l=1}^{k}\frac{1}{2^{l}}\sum_{v\in\left\{  0,1\right\}  ^{l}%
}(1-\operatorname*{H}(p_{v}))\\
&  \geq(1-\operatorname*{H}(p))k \enspace,
\end{align*}
by concavity of the binary entropy function, and the definition of~$p$.
\end{proof}

\end{document}